\newcommand{\dX}{\dot{X}}
\newcommand{\dY}{\dot{Y}}
\newcommand{\dy}{\dot{y}}
\newcommand{\dZ}{\dot{Z}}
\newcommand{\dz}{\dot{z}}
\newcommand{\bZ}{\mathbf Z}
\newcommand{\bY}{\mathbf Y}
 \DeclareMathOperator{\sign}{sign} 
\newtheorem{theorem}{Theorem}
\newtheorem{prop}[theorem]{Proposition}
\begin{document}

\title{Dependent time changed processes with applications to nonlinear ocean waves.}

\author{Pierre Ailliot\textsuperscript{1}, Bernard Delyon \textsuperscript{2}, Val\'erie Monbet \textsuperscript{2,3}, Marc Prevosto \textsuperscript{4}}
\vspace{20mm}
\maketitle

{\small $^1$ \em Laboratoire de Math\'ematiques de Bretagne Atlantique, UMR 6205, Universit\'e de Brest, France}\\
{\small $^2$ \em Institut de Recherche Math\'ematiques de Rennes, UMR 6625, Universit\'e de Rennes 1, France}\\
{\small $^3$ \em INRIA/ASPI, Rennes, France}\\
{\small $^4$ \em IFREMER, France}\\

\begin{center}
\date{\today}
\end{center}

%\vspace{20mm}
\maketitle

%\begin{center}
%\date{\today}
%\end{center}

\begin{abstract}
Many records in environmental sciences exhibit asymmetric trajectories and there is a need for simple and tractable models which can reproduce such features. In this paper we explore an approach based on applying both a time change and a marginal transformation on Gaussian processes. The main originality of the proposed model is that the time change depends on the observed trajectory. We first show that the proposed model is stationary and ergodic and provide an explicit characterization of the stationary distribution. This result is then used to build both parametric and non-parametric estimates of the time change function whereas the estimation of the marginal transformation is based on up-crossings. Simulation results are provided to assess the quality of the estimates. The model is applied to shallow water wave data and it is shown that the fitted model is able to reproduce important statistics of the data such as its spectrum and marginal distribution which are important quantities for practical applications. An important benefit of the proposed model is its ability to reproduce the observed asymmetries between the crest and the troughs and between the front and the back  of the waves by accelerating the chronometer in the crests and in the front of the waves.

\vspace{.5cm} 

{\bf Keywords:}  time change, Gaussian processes, nonlinear processes, ocean waves
\end{abstract}

\section{Introduction}
\label{sec:intro}

Marine coastal systems are subject to loadings induced by ocean waves and long time series of wave conditions are often needed to study the performances of such systems. It is very difficult to measure wave conditions on long periods of time and thus there is a need for stochastic models which can simulate quickly long time series of wave conditions with realistic characteristics. The systems are often located near the shore where the waves are known to be non-linear. Figure~\ref{fig:wave} (bottom plot) shows a typical time series  of sea-surface elevation  in shallow water. It exhibits both different shapes for crests and troughs, with sharp crests and flat troughs,  and asymmetries between the steeper front of waves and their backs. Many other situations in environmental science or in econometric lead to study asymmetric records and there is a need for simple and tractable alternatives to Gaussian processes. An approach based on applying a time change to a Gaussian process is investigated in this paper.

\begin{figure}[!ht]
\centering
\makebox{\includegraphics[scale=.75]{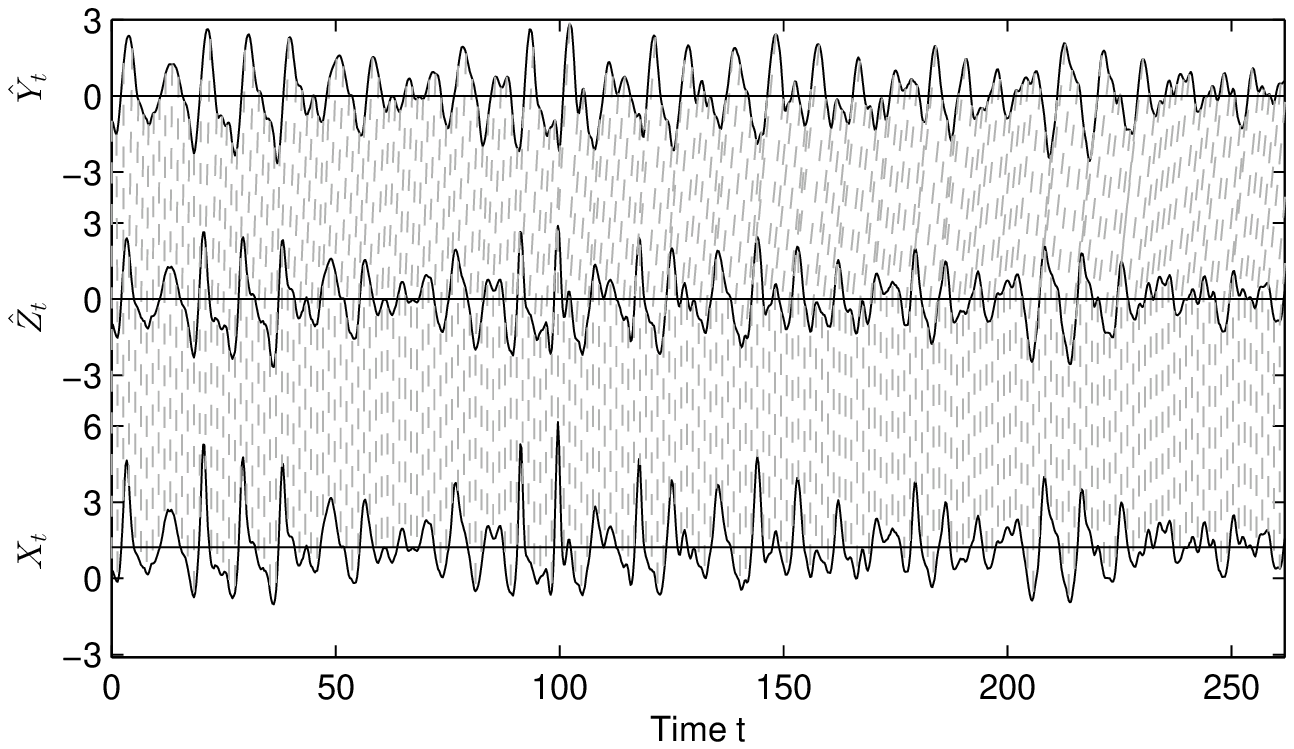}}
\makebox{\includegraphics[scale=.75]{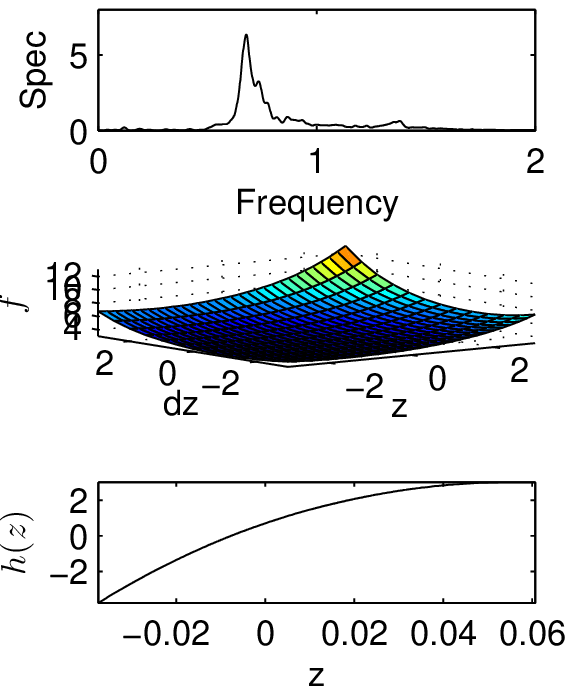}}
\caption{\label{fig:wave} {Bottom left: sequence of measured sea-surface elevation (location 7). Middle left and top left: estimated processes $\hat Z$ and $\hat Y$ for the model defined by (\ref{eq:mod2}-\ref{eq:marg}). The right panel shows the estimated functions $\hat h$ (bottom), $\hat f$ (middle) and the empirical spectral density of $Y$ (top). }}
\end{figure}

A \textit{time-changed process} $\bZ=(\bZ_t)_{t\geqslant 0}$ is defined as 
\begin{equation}
\label{eq:def}
\bZ_t = \bY_{\varphi(t)}
\end{equation}
where $\{\varphi(t)\}_{t\geqslant 0}$ denotes a non-decreasing stochastic process referred to as the \textit{time change} hereafter. $\bY=(\bY_t)_{t\geqslant 0}$ is a stochastic process which represents the evolution of the phenomenon of interest in the original time scale. Bold letters are used to stress the fact that $\bY$ and $\bZ$ are possibly multivariate processes. $\bY$ is generally referred to as the \textit{base process} in the literature.

Models with time change have become popular for financial applications  since they were introduced in  \cite{clark1973subordinated}  (see e.g. \cite{veraart2010time} for a recent review). In this application field the modified time $\varphi(t)$ is generally interpreted as a 'business time' and $\bY$ and $\varphi$ are generally assumed to be independent Levy processes.
Deformations have also been extensively used for modeling non-stationary fields in environmental applications since the seminal paper \cite{sampson1992nonparametric}. It generally consists in seeking a deterministic transformation $\varphi$ to map a stationary field $\bY$ into a non-stationary field $\bZ$ through (\ref{eq:def}).  
 These methods can deal with multidimensional (e.g. spatial) fields but they generally require  replicates of the random field as input to fit the model. More recently methods that can deal with a unique realization of a densely observed field were proposed (see \cite{anderes2008estimating} and references therein). Bayesian approaches where the deformation is modeled using a Gaussian prior were also proposed in \cite{schmidt2003bayesian} and then extended to include covariates in \cite{schmidt2011considering}.

Specific models which include a time change have already been proposed for modeling wave data. Of particular interest is the two-dimensional stochastic Gauss-Lagrange ocean wave model discussed in \cite{lindgren2009exact}. The ``space wave''  which describes the one-dimensional sea-surface elevation $Z$ at a fixed time is defined implicitly as
\begin{equation}
%\label{eq:definv}
\nonumber
Z_{\psi(t)} = Y_t.
\end{equation}
where $\psi$ and $Y$ denote correlated Gaussian processes which describe respectively the horizontal and vertical motions. Note that the observed process is the sea surface elevation which is a univariate process.  In this model the trajectories are defined uniquely and explicitly as (\ref{eq:def}) only if $\psi$ is one to one which does not always hold true. Another model based on applying a moving average on a Laplace motion (or variance Gamma process) is used in \cite{Raillard2014} and validated on the same dataset than considered in this paper. A Laplace motion can be defined as (\ref{eq:def}) with $Y$ a Brownian motion and $\varphi$ an independent Gamma process.  
 
The originality of the model considered in this paper consists in assuming that the \textit{time change $\varphi$ is a stochastic process which depends on the base process $\bY$}. More precisely, the first model considered in this paper assumes that
\begin{equation}
\label{eq:phi1}
\bZ_t = \bY_{\varphi(t)}, ~ \varphi(t) = \int _0 ^t f(\bY_{\varphi(s)}) ds
\end{equation}
with $\bY$ a stationary (ergodic) process and $f$ a positive function referred to as the \textit{time change function} hereafter. In this model the time change is thus a function of the base process and $f$ links the speed of the chronometer to the observed process.
This model is quite general since $\bY$ can be multivariate and include for example covariates or latent variables. Note that (\ref{eq:phi1}) can be rewritten equivalently as 
\begin{equation}
\label{eq:psi1}
\bZ_{\psi(t)} = \bY_{t}, ~ \psi(t) = \int _0 ^t \frac{1}{f(\bY_{s})} ds
\end{equation}
where $\psi$ denotes the reciprocal function of $\varphi$.

Then we consider a more specific model for sea-surface wave elevation where the time change $\varphi$ is a deterministic function of the path of the process and its derivative. More precisely we assume that 
\begin{equation}
 \label{eq:mod2}
Z_t = Y_{\varphi(t)}, ~ \varphi(t) = \int _0 ^t f(Y_{\varphi(s)},\dY_{\varphi(s)}) ds
 \end{equation}
with $Y$ a univariate differentiable stationary ergodic Gaussian process,  $\dY$ the derivative of $Y$ and $f$ a positive function.
A Gaussian process $Y$ has symmetric paths and (\ref{eq:mod2}) will permit to create both \textit{crests-troughs} or \textit{vertical} asymmetries (i.e. different %slope distributions according to the level of the process and thus different
shapes for the crests and the troughs) and \textit{front-back} or \textit{horizontal} asymmetries (i.e. different shapes for the front and the back of the waves).
%slope distributions according to the derivative of the process). 
An example of sequence simulated with this model is shown on Figure~\ref{fig:wave}. The chosen $f$ function is increasing in both variables. It implies that the modified time $\varphi(t)$ increases quicker when the process is at high levels and thus that the crests of the time modified process $\{Z_t\}$ are narrower than the ones of $\{Y_t\}$ with the opposite holding true for the troughs. The chronometers also accelerates with the derivatives of the process and this leads to steeper wave fronts compared to the backs.

\begin{figure}[!ht]
\centering
\makebox{\includegraphics[scale=.7]{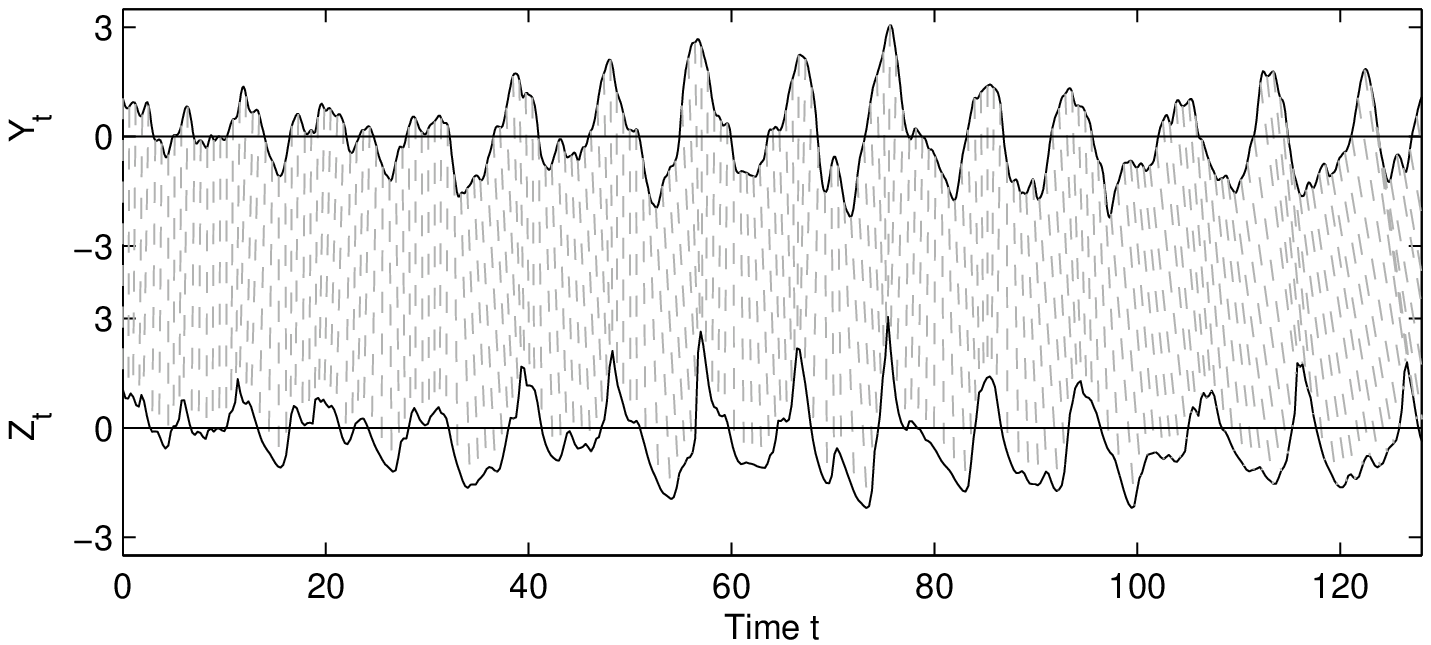}}
\makebox{\includegraphics[scale=.7]{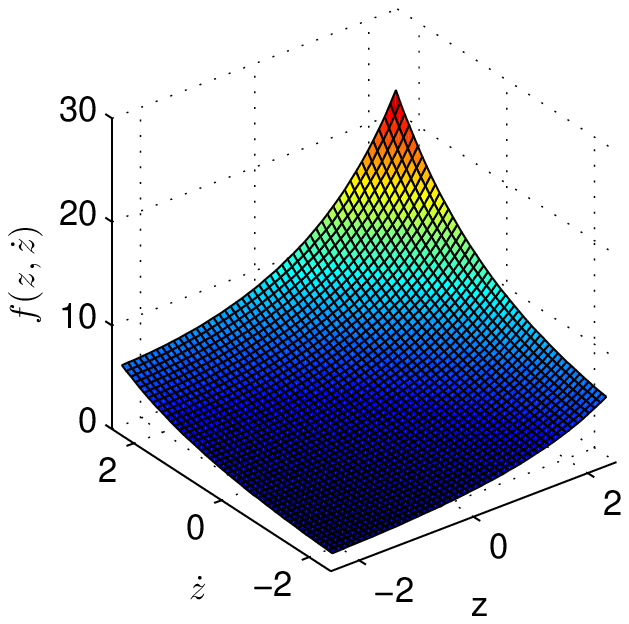}}
\caption{\label{fig:exacc} {Left: example of sequence simulated with the model (\ref{eq:mod2}). The top panel shows the Gaussian sequence and the bottom panel the time changed sequence. The vertical gray lines shows the time change. The time change function $f$ is shown on the right panel.}}
\end{figure}

The time change impacts the marginal distribution and a process defined by  (\ref{eq:mod2}) may have non-Gaussian marginal distribution. However the time change does not modify the level of the process and as a consequence the up-crossings of $Z$ and $Y$ share similar characteristics. A marginal transformation is then added to add flexibility in the modeling of the up-crossing intensity which is an important quantity for many applications. More precisely we assume that 
\begin{equation}
\label{eq:marg}
h(X_t) = Z_t
 \end{equation}
with  $Z$ defined by (\ref{eq:mod2}), $X$ the observed process and $h$  an increasing function referred to as the \textit{marginal link function} hereafter.

If $f=1$ then the process $X$ defined by (\ref{eq:mod2}-\ref{eq:marg}) satisfies $h(X_t)= Y_t$ with $\{Y_t\}$ a Gaussian process. This model, which is sometimes referred to as \textit{transformed Gaussian model} in the literature, is very classical for modeling processes or time series with non-Gaussian marginal distribution and has been extensively used in particular for wave data (see e.g. \cite{rychlik1997modelling}). %The function $h$ is generally chosen such that the transformed process $h(X_t)$ has approximately a Gaussian marginal distribution. This model is in particular very usual for wave data (see e.g. \cite{rychlik1997modelling}). %An obvious limitation of this model is the following.
Gaussian processes have symmetric paths and this has implications on the paths of a transformed Gaussian  process which can handle some crests-troughs asymmetries but no front-back asymmetries. This is a clear limitation of transformed Gaussian models which cannot reproduce wave sequences such as the one shown on Figure~\ref{fig:wave} which exhibits both kind of asymmetries.

The modeling principle is summarized on Figure \ref{fig:wave}. Starting from an observed trajectory with asymmetric path (bottom left), we first apply a marginal transformation  (\ref{eq:marg}) where the estimated function $h$ is shown on the bottom right panel. The concavity of this estimated function leads to decreasing the steepness of the crests compared to the troughs (middle left). Then a time deformation (\ref{eq:mod2}) is applied. The estimated $f$ function is increasing in both variables (see middle right panel) and thus the chronometer accelerates with the level and the process and its derivatives. It permits to act on both vertical and horizontal asymmetries and obtain a transformed trajectory which has approximatively symmetric path (top left) and can reasonably be  modeled using a Gaussian process. The practical motivation for this work was to develop methods to generate quickly long time series of sea-surface elevation with similar characteristics than the observed one. Using one of the methods that have been proposed in the literature to simulate Gaussian processes, realizations of $Y$ can be quickly  generated. It is then easy to deduce realizations of the transformed process $X$ once the transformation $f$ and $h$ have been estimated.

The distribution of the process $X$ defined by (\ref{eq:mod2}-\ref{eq:marg}) is characterized by the functions $f$ and $h$ and by the second order structure of the Gaussian process $Y$. This paper proposes methods to estimate these functions. For simplicity reasons we first focus on the time change model with no marginal transformation. In Section~\ref{sec:statio} we consider the general model defined by (\ref{eq:phi1}) and give conditions which imply stationarity and ergodicity. This result is then used in Sections \ref{sec:para} and \ref{sec:npara} to build respectively parametric and nonparametric estimates of $f$ in the model (\ref{eq:mod2}). Section~\ref{sec:upc} discusses the estimation of $f$ and $h$ and the second order structure of the Gaussian process $Y$ in the model defined by (\ref{eq:mod2}-\ref{eq:marg}). Section~\ref{sec:wave} discusses the results obtained when fitting this model to a specific wave dataset and conclusions are given in Section~\ref{sec:conclu}.

\section{Stationarity and ergodicity}
\label{sec:statio}

In this section we consider the model defined by (\ref{eq:psi1}). $\bZ$ and $\bY$, as well as
$\bZ_.$ and $\bY_.$ will denote the processes $(\bZ_t)_{t\geqslant 0}$ $(\bY_t)_{t\geqslant 0}$.
The stationarity of $\bY$
means that for any measurable function $g$, $g(\bY_.)$ has same distribution as $g(\bY_{.+s})$
for any $s>0$. The ergodicity means that   for any bounded measurable function $g$
\begin{align*}
\lim_{H\rightarrow\infty}\frac1H\int_0^Hg(\bY_{.+s})=E[g(\bY_.)]
\end{align*}
In order to check stationarity and ergodicity, it suffices to consider functions $g$
of the form
\begin{align}\label{gform}
g(\bY)=g_0(\bY_{t_1},\dots \bY_{t_n})
\end{align}
where $g_0$ is a bounded Borel measurable function; we may even assume that $g_0$ is
continuous. We will however keep the notation
$g(\bY_.)$ since it will appear to be handier.
\begin{prop}
\label{prop:ergo}
Let $(\bY_t)_{t\geqslant 0}$ be a  stationary process with values on $\mathbb R^d$ and continuous sample paths
and $f(.)$ be a   measurable real positive function defined on $\mathbb R^d$ such that $q_0=E\left[\frac1{f(\bY_0)}\right] < + \infty$.
Define the process $(\bZ_t)_{t\geqslant 0}$ by (\ref{eq:psi1}).
Then $\bZ$ is stationary for the measure $P'$ defined as
\begin{align}
\nonumber
&E'[g(\bZ_.)]=q_0^{-1}E\left[\frac{g(\bZ_.)}{f(\bZ_0)}\right].
\end{align}
If $\bY$ is ergodic then $\bZ$ is ergodic under $P'$. In particular for any measurable real function $g$ defined on $\mathbb R^d$ such that $E\left[\frac{|g(\bY_0)|}{f(\bY_0)}\right]<\infty$
\begin{align}
\label{eq:ergmarg}
E'[g(\bZ_t)]=\lim_{H\rightarrow\infty}\frac1H\int_0^H g(\bZ_{s}) ds=q_0^{-1}E\left[\frac{g(\bY_0)}{f(\bY_0)}\right]
\end{align}

\end{prop}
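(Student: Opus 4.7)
The plan is to reduce the claim to Birkhoff's ergodic theorem for $\bY$ via a deterministic pathwise change of variables. First, since $f>0$ and $\bY$ has continuous paths, $\psi(t)=\int_0^t ds/f(\bY_s)$ is strictly increasing and, by Birkhoff applied to $1/f$, satisfies $\psi(H)/H\to q_0>0$ almost surely; hence $\psi(H)\to\infty$ and $\varphi=\psi^{-1}$ is well-defined on $[0,\infty)$ with $\varphi(T)/T\to q_0^{-1}$. The substitution $s=\varphi(t)$, $dt=ds/f(\bY_s)$ yields, for any measurable path functional $G$,
\begin{align}
\int_0^H G(\bY_{\varphi(t)+.})\, dt \;=\; \int_0^{\varphi(H)} \frac{G(\bY_{s+.})}{f(\bY_s)}\, ds. \nonumber
\end{align}
Second, writing $\bZ_.=\Phi(\bY_.)$ for the measurable map induced by (\ref{eq:psi1}) on path space, a pathwise check of the semigroup relation $\varphi(t+u)=\varphi(t)+\varphi^{(\varphi(t))}(u)$, where $\varphi^{(r)}$ denotes the time change built from the shifted base path $\bY_{r+.}$, gives $\bZ_{t+.}=\Phi(\bY_{\varphi(t)+.})$. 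For a cylinder $g$ of the form (\ref{gform}), setting $\tilde g:=g\circ\Phi$ produces the key identity $g(\bZ_{.+t})=\tilde g(\bY_{\varphi(t)+.})$.

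Combining the two steps,
\begin{align}
\frac{1}{H}\int_0^H g(\bZ_{.+t})\, dt
\;=\; \frac{\varphi(H)}{H}\cdot \frac{1}{\varphi(H)}\int_0^{\varphi(H)} \frac{\tilde g(\bY_{s+.})}{f(\bY_s)}\, ds. \nonumber
\end{align}
Birkhoff applied to the stationary ergodic process $\bY$ and the integrable path functional $F(y_.):=\tilde g(y_.)/f(y_0)$ shows the right-hand side converges almost surely to $q_0^{-1} E[\tilde g(\bY_.)/f(\bY_0)]=q_0^{-1} E[g(\bZ_.)/f(\bZ_0)]=E'[g(\bZ_.)]$, using $\bZ_0=\bY_0$. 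Choosing $g$ to depend only on the value at $0$ immediately gives (\ref{eq:ergmarg}).

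Stationarity and ergodicity then follow. Replacing $g(\bZ_{.+t})$ by $g(\bZ_{.+t+s})$ in the ergodic average above only changes boundary terms of order $s/H$, so the limit is unchanged; but the same ergodic formula applied to the shifted cylinder identifies this limit as $E'[g(\bZ_{.+s})]$, giving $E'[g(\bZ_{.+s})]=E'[g(\bZ_.)]$ for every cylinder $g$, which is stationarity of $P'$. Ergodicity follows because the limit is a deterministic constant: $P'$ is absolutely continuous with respect to the law of $\bZ_.$ under the original measure (density $q_0^{-1}/f(\bZ_0)>0$), so the convergence transfers to a $P'$-a.s.\ statement, and Birkhoff under the now-stationary measure $P'$ forces the invariant $\sigma$-algebra to be $P'$-trivial. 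Extension to unbounded $g$ with $E[|g(\bY_0)|/f(\bY_0)]<\infty$ follows by truncation and dominated convergence. The main obstacle is the pathwise, measurable verification of the semigroup identity in the second step: it is precisely this identity that allows the change of variable to align ergodic averages of $g(\bZ_{.+t})$ in $t$-time with the classical averages of the functional $F$ on $\bY$ in $s$-time, reducing everything to Birkhoff on $\bY$.
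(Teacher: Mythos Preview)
Your argument is essentially the same as the paper's: the key semigroup identity you prove, $\bZ_{t+.}=\Phi(\bY_{\varphi(t)+.})$, is exactly the paper's $\bZ^\tau_.=\bZ_{.+\psi(\tau)}$ read the other way, and both proofs hinge on the same change of variables $s=\varphi(t)$, $dt=ds/f(\bY_s)$ to convert time averages in $\bZ$-time into weighted averages in $\bY$-time.

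The one structural difference is the order of the argument, and it has a small cost. The paper proves stationarity of $P'$ \emph{first}, using only stationarity of $\bY$: from $E'[g(\bZ_.)]=q_0^{-1}E[g(\bZ_{.+\psi(\tau)})/f(\bZ_{\psi(\tau)})]$ for every $\tau$ (stationarity of $\bY$), it averages over $\tau\in[0,H]$, changes variables \emph{inside the expectation}, and bounds $|E'[g(\bZ_.)]-E'[g(\bZ_{.+t})]|\le 2t\|g\|_\infty/H$. Ergodicity of $\bY$ is never invoked here. Your route, by contrast, establishes the almost-sure ergodic limit first via Birkhoff and then reads off stationarity from the constancy of that limit. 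This is fine when $\bY$ is ergodic, but the proposition asserts stationarity of $\bZ$ under $P'$ assuming only stationarity of $\bY$; as written, your proof does not cover that case. The fix is immediate: run your change-of-variables identity inside an expectation rather than pathwise (so that stationarity of $\bY$ alone suffices), obtaining $q_0E'[g(\bZ_.)]=E\!\left[\frac1H\int_0^{\psi(H)} g(\bZ_{.+s})\,ds\right]$ and then the same $O(t/H)$ boundary estimate.

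Otherwise the proposal is correct; the absolute-continuity remark transferring a.s.\ convergence from $P$ to $P'$ and the truncation to pass to unbounded $g$ are appropriate closing steps.
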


The proof of this proposition is given in the appendix. (\ref{eq:ergmarg}) is the key result for the rest of this paper. It shows that empirical means computed on a realization of the time changed process $\bZ$ converges when the length of the realization tends to infinity. The limit depends both on the distribution of the base process $\bY$ at time $0$ and on the time change function $f$.

\section{Method of moments}
\label{sec:para}

In this section we consider the model defined by (\ref{eq:mod2}). It is a particular case of the model defined by (\ref{eq:phi1}) with $\bY_t=(Y_t,\dY_t)$ and Proposition~\ref{prop:ergo} gives stationarity and ergodicity conditions for $Z$. We further assume that $Y$ is a  Gaussian process and thus for any time $t$ the random variables $Y_t$ and $\dY_t$ are independent Gaussian variables. We have $E[\dY_t]=0$ and we can assume without loss of generality that $var(\dY_t)=1$. We also assume that $E[Y_t]=0$ and $var(Y_t)=1$. It is not restrictive in the practical application discussed in Section~\ref{sec:wave} since this condition will be taken into account when estimating the marginal transformation $h$ (see Section~\ref{sec:upc}).

\subsection{Moments in the general case}
\label{sec:mom}

The following proposition gives a general expression for the joint moments of $(Z_t,\dZ_t)$.

\begin{prop}
\label{prop:mom}
Let $Y$ be a univariate differentiable ergodic stationary Gaussian process such that $E[Y_t]=0$ and $var(Y_t)=var(\dY_t)=1$. Let $U=(U_1,U_2)$ denote a two dimensional standard normal vector. Assume that (\ref{eq:mod2}) holds true with $f$ a positive measurable real function such that
\begin{align}\label{bouf2}
q_0=E\left[\frac{1}{f(U)}\right] < + \infty.%, \ sup_{y \in K} f(y)+1/f(y) < + \infty
\end{align}
Then $Z$ is a differentiable ergodic stationary process.
Let $g$ be a measurable real function defined on $\mathbb R^2$ such that
 $E\left[\frac{|g(U)|}{f(U)}\right]<\infty$.
 The stationary measure $P'$ is such that
 		\begin{equation}
	\label{eq:moment}
E'[g(Z_s,\dot Z _s)] = \lim_{H\rightarrow\infty}\frac1H\int_0^H g(Z_s,\dot Z _s)ds=\frac{1}{q_0} E\left[\frac{g(U_1,U_2 f(U))} {f(U)}\right].
 	\end{equation}
\end{prop}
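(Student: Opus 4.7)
The plan is to reduce to Proposition~\ref{prop:ergo} by taking the bivariate base process $\bY_t=(Y_t,\dY_t)$. Setting $\bZ_t=\bY_{\varphi(t)}$, the model (\ref{eq:mod2}) fits the template (\ref{eq:psi1}) once we recognise that the time change $\varphi$ is driven by a function of $\bY_{\varphi(\cdot)}$. The hypothesis $q_0=E[1/f(U)]<\infty$ is exactly the integrability condition required by Proposition~\ref{prop:ergo}, since under the Gaussian assumption the pair $(Y_0,\dY_0)$ has the distribution $U=(U_1,U_2)$ of independent standard normals. Independence of $Y_0$ and $\dY_0$ holds because the covariance $R(t)=E[Y_tY_0]$ is even, hence $E[Y_0\dY_0]=R'(0)=0$, and for a Gaussian vector uncorrelated coordinates are independent.

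The key computational step is to rewrite $g(Z_t,\dot Z_t)$ as a function of $\bY_{\varphi(t)}$. By the chain rule, since $\varphi$ is differentiable with $\dot\varphi(t)=f(\bY_{\varphi(t)})$, one obtains
\begin{equation*}
\dot Z_t=\dY_{\varphi(t)}\,\dot\varphi(t)=\dY_{\varphi(t)}\,f(Y_{\varphi(t)},\dY_{\varphi(t)}).
\end{equation*}
Defining $\tilde g(y,\dot y)=g\bigl(y,\,\dot y\,f(y,\dot y)\bigr)$, this yields the identity $g(Z_t,\dot Z_t)=\tilde g(\bY_{\varphi(t)})=\tilde g(\bZ_t)$, so the moment of interest is a moment of a function of $\bZ_t$ alone, to which Proposition~\ref{prop:ergo} can be directly applied.

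Applying (\ref{eq:ergmarg}) to $\tilde g$ then gives
\begin{equation*}
E'[g(Z_s,\dot Z_s)]=\lim_{H\to\infty}\frac1H\int_0^H g(Z_s,\dot Z_s)\,ds=q_0^{-1}\,E\!\left[\frac{\tilde g(\bY_0)}{f(\bY_0)}\right]=q_0^{-1}E\!\left[\frac{g(U_1,U_2 f(U))}{f(U)}\right],
\end{equation*}
which is the claimed formula. The integrability hypothesis $E[|g(U)|/f(U)]<\infty$ translates into the condition $E[|\tilde g(\bY_0)|/f(\bY_0)]<\infty$ needed in Proposition~\ref{prop:ergo}.

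The main obstacles are verification rather than computation. One needs to check that (i) $\varphi$ is well-defined and differentiable with $\dot\varphi=f(\bY_{\varphi(\cdot)})$, which follows because $Y$ and $\dY$ have continuous sample paths, so $s\mapsto f(\bY_s)$ is continuous and the ODE defining $\varphi$ has a unique increasing absolutely continuous solution; (ii) the bivariate process $\bY=(Y,\dY)$ is stationary and ergodic with continuous sample paths, which follows from the corresponding properties of $Y$ since $\dY$ is a measurable functional of the path of $Y$; and (iii) the integrability hypothesis of Proposition~\ref{prop:ergo} reduces to the stated condition (\ref{bouf2}). Once these are in place, the proposition is an immediate specialisation of Proposition~\ref{prop:ergo}.
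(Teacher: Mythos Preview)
Your proof is correct and follows essentially the same route as the paper: set $\bY_t=(Y_t,\dY_t)$, use the chain rule to write $\dot Z_t=\dY_{\varphi(t)}f(\bY_{\varphi(t)})$ so that $g(Z_t,\dot Z_t)=\tilde g(\bZ_t)$, and then apply Proposition~\ref{prop:ergo}. One small slip: you claim $s\mapsto f(\bY_s)$ is continuous, but $f$ is only assumed measurable; this does not matter for the argument, since Proposition~\ref{prop:ergo} (via the integral definition of $\psi$) already handles measurable $f$.
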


\begin{proof}
We have
$$\frac1H\int_0^H g(Z_s,\dot Z _s)ds = \frac1H\int_0^H g\left(Y_{\varphi(s)}, f\left(Y_{\varphi(s)},\dY_{\varphi(s)}\right) \dY_{\varphi(s)}\right)ds.$$
Now by applying Proposition~\ref{prop:ergo} with $\bY_t=(Y_t,\dY_t)$ we get that $\bZ_t=(Y_{\varphi(t)},\dY_{\varphi(t)})$ is a stationary ergodic process such that
$$ \lim_{H\rightarrow\infty}  \frac1H\int_0^H g\left(Y_{\varphi_(s)}, f\left(Y_{\varphi_(s)},\dY_{\varphi_(s)}\right) \dY_{\varphi_(s)}\right)ds =q_0^{-1}E\left[\frac{g\left(Y_{0}, f\left(Y_{0},\dY_{0}\right)\dY_0\right)}{f(Y_0,\dY_0)}\right]$$

with $ q_0=E\left[\frac1{f(Y_0,\dY_0)}\right]$. Thus (\ref{eq:moment}) holds true since $Y_0$ and $\dY_0$ are independent $\mathcal N(0,1)$ variables. 
\end{proof}

The previous proposition shows that the time change function $f$ is related to the joint distribution of $(Z_t,\dZ_t)$ and thus suggests estimating $f$ to match the empirical joint distribution of $(Z_t,\dZ_t)$. Note however that the relation is not straightforward and that it seems difficult to deduce an estimate of $f$ from an estimate of this joint distribution in the general case. Two different simplifications are considered hereafter. In the first one, which is discussed in the rest of this section, we assume a parametric shape for $f$. In the second one, which is discussed in the Section~\ref{sec:npara}, we assume that $f$ is product of two functions. This leads to an explicit expression for the joint probability density function (pdf) of $(Z_t,\dZ_t)$ and the derivation of nonparametric estimates for $f$.

\subsection{Moments in the quadratic exponential model}
 In the rest of this section we assume that the time change function $f$ has a quadratic exponential shape defined as 
\begin{align} 
\label{eq:defgauss}
 f(y,\dy) =  \exp (\beta_0+\beta_1 y + \beta_2 \dy + \frac{\beta_{1,1}}{2} y^2 + \beta_{1,2} y\dy + \frac{\beta_{2,2}}{2} \dy^2).  
\end{align}
with unknown parameter $\beta=(\beta_0,\beta_1,\beta_2 \beta_{1,1}, \beta_{1,2}, \beta_{2,2}) \in \mathbb R^6$. $\beta_1$ and $\beta_{1,1}$ describe the effect of the level of the base process on the time change and thus the vertical asymmetry whereas $\beta_2$ and $\beta_{2,2}$ describe the effect of the derivative and the horizontal asymmetry. $\beta_{1,2}$ may be interpreted as an interaction coefficient and $\beta_0$ as a scale factor. We discuss the realism of this condition for wave data in Section~\ref{sec:wave}. Other parametric  time change function may be dealt with in a similar way.

\begin{prop}
\label{prop:momgaus}
Let $Y$ be a univariate differentiable ergodic stationary Gaussian process such that $E[Y_t]=0$ and $var(Y_t)=var(\dY_t)=1$. Assume that (\ref{eq:mod2}) holds true with $f$ given by (\ref{eq:defgauss}) and that the matrix $Q=\Big(\begin{array}{cc} \beta_{1,1}+1 & \beta_{1,2} \\ \beta_{1,2} & \beta_{2,2}+1 \end{array} \Big)$ is definite positive. Then $Z$ is a differentiable ergodic stationary process. The stationary marginal distribution of $Z$ is Gaussian with mean $E'[Z_t]$ and variance $var'(Z_t)$ such that 
\begin{align} 
\label{eq:m1}
var'(Z_t)&=\frac{1}{1+\beta_{1,1}-\frac{\beta_{1,2}^2}{1+\beta_{2,2}}}\\
%var'(Z_t)&=\frac{1+\beta_{2,2}}{(1+\beta_{2,2})(1+\beta_{1,1})-\beta_{1,2}^2}\\
\label{eq:m2}
\frac{E'[Z_t]}{var'(Z_t)}&=-\beta_1+\frac{\beta_2 \beta_{1,2}}{1+\beta_{2,2}}
\end{align} 

Denote $m_1=E[|U_1|\log|U_1|]\simeq 0.046$, 
$m_2 =E[|U_1|(\log|U_1|)^2] \simeq 0.33$,
$m_3 =E[|U_1|^3 \log|U_1]]=2m_1+\sqrt{\tfrac2\pi}$. Then
\begin{align}
\label{eq:m3} 
q_0E'\left[|\dot Z_t|\right]&=\sqrt{\tfrac2\pi}\\
\label{eq:m4}
q_0E'\left[\dot Z_t\log(|\dot Z_t|)\right] &= \beta_2\\
\label{eq:m5}
q_0E'\left[Z_t\dot Z_t\log(|\dot Z_t|)\right] &= \beta_{1,2}\\
\label{eq:m6}
q_0E'\left[|\dot Z_t|\log(|\dot Z_t|)\right] &= m_1+(\beta_0+\tfrac12\beta_{1,1}+\beta_{2,2})\sqrt{\tfrac2\pi}\\
\label{eq:m7}
q_0E'\left[Z_t|\dot Z_t|\log(|\dot Z_t|)\right] &= \beta_1\sqrt{\tfrac2\pi}\\
\label{eq:m8}
q_0E'\Big[|Z_t\dot Z_t|\log(|\dot Z_t|)\Big] &=m_1\sqrt{\tfrac2\pi}+\tfrac2\pi(\beta_0+\beta_{1,1}+\beta_{2,2})\\
\label{eq:m9}
q_0E'\left[|\dot Z_t| \left(\log(|\dot Z_t|)-\beta_1 Z_t -\tfrac12\beta_{1,1} Z_t^2\right)^2\right] 
&=m_2+\beta_0^2\sqrt{\tfrac2\pi}+2\beta_{2,2}^2\sqrt{\tfrac2\pi}
+2\beta_0 m_1+\beta_{2,2}\theta_3\\
\nonumber
&~~~+2\beta_0\beta_{2,2}\sqrt{\tfrac2\pi}+2\sqrt{\tfrac2\pi}(\beta_2^2+\beta_{1,2}^2)
\end{align}
\end{prop}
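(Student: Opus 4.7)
The overall strategy is to apply Proposition~\ref{prop:mom} with well chosen test functions $g$ and reduce every claim to a standard two-dimensional Gaussian moment calculation. Before doing so, I would verify the hypothesis $q_0 = E[1/f(U)] < +\infty$ of Proposition~\ref{prop:mom}: combining the factor $1/f(U)$ with the standard Gaussian density puts $-\tfrac{1}{2} u^{\top} Q u$ in the exponent (plus affine terms), so finiteness is immediate from $Q$ being positive definite, and the stationarity, ergodicity and differentiability of $Z$ then follow from the conclusion of Proposition~\ref{prop:mom}.

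For the marginal law of $Z_t$ I would apply Proposition~\ref{prop:mom} with the bounded test function $g(z,\dot z) = e^{iuz}$. This yields
\[
E'[e^{i u Z_t}] \;=\; \frac{1}{q_0} \int_{\mathbb{R}^2} e^{i u u_1}\, \frac{1}{2\pi}\, e^{-\frac{1}{2}u^{\top} Q u - \beta_1 u_1 - \beta_2 u_2 - \beta_0}\, du_1\, du_2 .
\]
Integrating out $u_2$ by completing the square collapses the double integral to a one-dimensional Gaussian integral in $u_1$ whose quadratic coefficient is $1 + \beta_{1,1} - \beta_{1,2}^2/(1+\beta_{2,2})$ and whose linear coefficient is $\beta_1 - \beta_2 \beta_{1,2}/(1+\beta_{2,2}) - i u$. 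Reading off the variance and mean of the resulting Gaussian characteristic function proves that $Z_t$ is Gaussian and gives (\ref{eq:m1}) and (\ref{eq:m2}).

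For (\ref{eq:m3})--(\ref{eq:m8}) the key simplification is that the factor $f(U)$ produced by the substitution $\dot Z_s \mapsto U_2 f(U)$ cancels the $1/f(U)$ weight of Proposition~\ref{prop:mom} exactly, while the logarithm splits as $\log|U_2 f(U)| = \log|U_2| + \log f(U)$ where $\log f(U)$ is the quadratic polynomial in (\ref{eq:defgauss}). After further noting that
\[
\log f(U) - \beta_1 U_1 - \tfrac{1}{2}\beta_{1,1} U_1^2 \;=\; \beta_0 + \beta_2 U_2 + \beta_{1,2} U_1 U_2 + \tfrac{1}{2}\beta_{2,2} U_2^2,
\]
each of (\ref{eq:m3})--(\ref{eq:m8}) reduces to the expectation of a low-degree polynomial in $U$ times $\log|U_2|$ (or, for (\ref{eq:m3}), with no logarithm) against the standard Gaussian. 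Almost every mixed moment vanishes by parity in $U_1$ or in $U_2$, and the single surviving contribution in each case matches the stated right-hand side.

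The main computational obstacle is (\ref{eq:m9}). After the same simplification it becomes the Gaussian expectation of
\[
|U_2|\bigl(\log|U_2| + \beta_0 + \beta_2 U_2 + \beta_{1,2} U_1 U_2 + \tfrac{1}{2}\beta_{2,2} U_2^2\bigr)^2 .
\]
Expanding the square produces of the order of a dozen mixed moments; the odd ones vanish by parity, and the surviving ones involve only $m_1$, $m_2$, $m_3$ together with the odd absolute moments $E[|U_2|^{2k+1}] = 2^k k!\sqrt{2/\pi}$. Collecting the surviving terms produces the announced identity (the symbol $\theta_3$ on the right-hand side appears to be a typographical rendering of $m_3$).
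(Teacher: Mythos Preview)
Your argument is correct and follows essentially the same route as the paper: verify $q_0<\infty$ from the positive-definiteness of $Q$, apply Proposition~\ref{prop:mom}, integrate out $u_2$ to identify the Gaussian marginal of $Z_t$, and then for (\ref{eq:m3})--(\ref{eq:m9}) exploit the cancellation of $f(U)$ against the weight $1/f(U)$ together with the decomposition $\log|U_2 f(U)| = \log|U_2| + \log f(U)$ and parity. The only cosmetic difference is that the paper identifies the marginal law by reading off the density after integrating in $u_2$ for a generic bounded $g(u_1)$, whereas you use the characteristic function; both yield (\ref{eq:m1})--(\ref{eq:m2}) immediately, and your remark that $\theta_3$ should be $m_3$ is correct.
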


\begin{proof}
If $Q$ is definite positive then condition (\ref{bouf2}) is satisfied and thus  (\ref{eq:moment}) holds true. Applying (\ref{eq:moment}) with $g(u_1,u_2)=g(u_1)$ a measurable bounded real function  we get
$$E'[g(Z_s)] = \frac{1}{ 2\pi q_0} \iint_{\mathbb R^2}  \frac{g(u_1)} {f (u_1,u_2)}\exp\left({-\frac{u_1^2+u_2^2}{2}}\right)du_1 du_2.$$
Then replacing $f$ by its expression (\ref{eq:defgauss}) and integrating of $u_2$ we obtain 
$$E'[g(Z_s)] = \int_{\mathbb R}  g(u_1) \phi(u_1) du_1$$ where $\phi$ denotes the pdf of Gaussian distribution with mean and variances given by (\ref{eq:m1}-\ref{eq:m2}).

 Now applying (\ref{eq:moment}) with $g(u_1,u_2)=|u_2|$ we get
$$E'[|\dZ_s|)] = \frac{1}{q_0} E\left[\frac{|U_2 f(U)|} {f(U)}\right]= E\left[|U_2|\right] = \sqrt{\tfrac2\pi} $$
and thus (\ref{eq:m3}). Similar calculations lead to (\ref{eq:m4}-\ref{eq:m9}).
\end{proof}

The model which defines $Z$ in Proposition~\ref{prop:momgaus} is referred to as the \textit{quadratic exponential  model} hereafter.

\subsection{Some simulation results in the quadratic exponential  model}
\label{sec:simup} 

Various estimates of $\beta$ can be built based on the relations given in Proposition \ref{prop:momgaus}. Two possibilities which we have tried are introduced in this section and compared using simulations.

A first estimate, denoted $\hat \beta$ hereafter, can be obtained by solving (\ref{eq:m4}-\ref{eq:m9}) with the theoretical moments replaced by their empirical estimates and $q_0$ estimated using (\ref{eq:m3}). In practice $\beta_2$, $\beta_{1,2}$ and $\beta_1$ are obtained using respectively (\ref{eq:m4}), (\ref{eq:m5}) and (\ref{eq:m7}). Then an expression for $\beta_{1,1}$ is derived from (\ref{eq:m6}) and (\ref{eq:m8}) (note that the term $\beta_0+\beta_{2,2}$ appears in these two formulas). Finally  $\beta_0$ and $\beta_{2,2}$ are estimated using (\ref{eq:m7}) and (\ref{eq:m9}) with the first equation used to express  $\beta_{2,2}$ as a linear function of $\beta_0$ and the second one to give a second order equation for $\beta_0$.

Simulation results indicate that $\hat \beta$ provides poor estimations for the parameters $\beta_1$ and $\beta_{1,1}$ (see Figure~\ref{fig:bootpar}) which describe the effect of the level of the process on the time change. It suggests using (\ref{eq:m1}-\ref{eq:m2}) to reestimate these parameters using
\begin{align}
\label{eq:b1}
&\beta_1=\frac{\beta_2\beta_{1,2}}{1+\beta_{2,2}}-\frac{E'[Z_t]}{var'(Z_t)}\\
\label{eq:b11}
&\beta_{1,1}=\frac1{var'(Z_t)}-1+\frac{\beta_{1,2}^2}{1+\beta_{1,2}}.
\end{align} 
with the theoretical moments being replaced by their empirical estimates and $\beta$ by $\hat \beta$ in the right hand terms. We denote $\tilde \beta=(\hat \beta_0,\tilde\beta_1,\hat \beta_2 \tilde \beta_{1,1}, \hat \beta_{1,2}, \hat \beta_{2,2}) $ the corresponding estimates.

Let us now introduce the numerical setup used to perform the simulations. We have chosen to consider parameters values obtained when fitting the model to the wave data shown on Figure~\ref{fig:wave}. The parameters $\beta_{1,2}$ was found to be close to $0$ (see Section~\ref{sec:appli} for a more detailed discussion) and it was decided to force this parameter to be equal to $0$. It permits to obtain a model with no interaction which is compatible with the assumptions made in Section~\ref{sec:npara} and allows comparison between the parametric and the nonparametric approaches. More precisely the chosen parameter values is  $\beta^*=(1.253, 0.110, 0.096, 0.068,  0 , 0.088)$. The corresponding $f$ function  is shown on  Figure~\ref{fig:exboot} and the spectral density which has been used to simulate the Gaussian process $Y$ is very close to the one shown on Figure~\ref{fig:wave}. The peak period for $Z$ is about $2.2 s$, the sampling time step is $0.027s$ and the length of the simulated sequence is $10^5$ time points. It corresponds to approximatively $1200$ waves. This is equivalent to a few hours of wave measurements in the ocean and to the length of the measurements described more precisely in Section~\ref{sec:wave}. The simulated sequences have similar characteristics than the sequence $Z$ shown on Figure~\ref{fig:wave} (left middle plot) with both crests-troughs and front-back asymmetries. 

Figure~\ref{fig:bootpar}  shows an important improvement in the estimation of $\beta_1$ and $\beta_{1,1}$ when using $\tilde \beta$ and $\hat \beta$, the other estimates being unchanged by construction. Simulations with other parameter values confirmed these results and led us to keep the estimate $\tilde \beta$ hereafter. 

\begin{figure}[!ht]
\centering
			\makebox{\includegraphics[scale=.8]{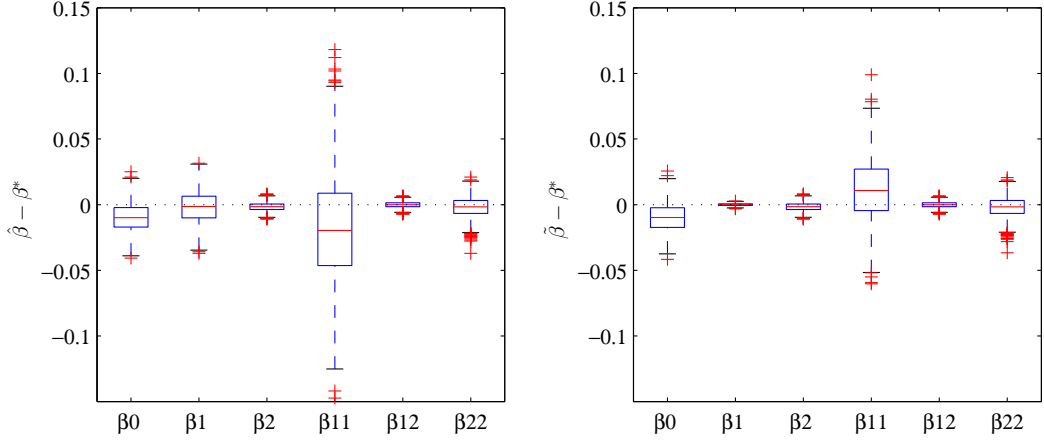}} 
\caption{\label{fig:bootpar} {Distribution of $\hat \beta -\beta^*$ (left) and $\tilde \beta -\beta^*$ (right) in the quadratic exponential model. Boxplots based on $1000$ simulated sequences. The simulation experiment is described in Section~\ref{sec:simup}.}}
\end{figure}

\section{Non-parametric estimation}
\label{sec:npara}
In this section we still consider the model defined by (\ref{eq:mod2}) with $Y$ a stationary differentiable ergodic Gaussian process. In order to get an explicit and tractable expression for the joint pdf of $(Z_t,\dZ_t)$ and deduce non-parametric estimate of $f$ we further assume that there is no interaction between the effect of the level and the derivative, i.e. for any $(y,\dy) \in \mathbb R^2$
\begin{equation}
 \label{eq:prod}
 f(y,\dy)=f_1(y)f_2(\dy).
 \end{equation}
We discuss the realism of this condition for wave data in Section~\ref{sec:wave}. In this model $f_1$ is related to the vertical asymmetries whereas $f_2$ describes horizontal asymmetries. An increasing  $f_1$ function leads to a process $Z$ with crests narrower than the troughs and  if $f_2$ is increasing then the front of the waves are steeper than the back.

The joint pdf of $(Z_t,\dZ_t)$ is given in Section~\ref{sec:jointpdf}. This result is used in Section~\ref{sec:estimnp} to build non-parametric estimates of $f_1$ and $f_2$. A simulation experiment is performed in Section~\ref{sec:simunp} to assess the performance of the estimates.
	 
\subsection{Joint distribution of $(Z_t,\dZ_t)$ in the model with no interaction}
\label{sec:jointpdf}

\begin{prop}
\label{prop:f1f2}
Let $Y$ be a univariate differentiable ergodic stationary Gaussian process such that $E[Y_t]=0$ and $var(Y_t)=var(\dY_t)=1$. Assume that (\ref{eq:mod2}) holds true with $f$ a positive  continuously differentiable real function such that (\ref{bouf2}) and (\ref{eq:prod}) hold true.  Assume further that $f_1$ and $f_2$ are continuously differentiable functions and that the function  $\dy \mapsto \dy f_2(\dy)$ is one to one and denote $k$ its inverse function. The stationary measure $P'$ is such that for any bounded measurable function $g : \mathbb R^2 \rightarrow \mathbb R$ 
 		\begin{equation}
\nonumber
E'[g(Z_s,\dot Z _s)] = \lim_{H\rightarrow\infty}\frac1H\int_0^H g(Z_s,\dot Z _s)ds=\iint_{\mathbb R ^2} g(z,\dz) p_{Z,\dZ}(z,\dz)dz d\dz
 		\end{equation}
 		with joint pdf
\begin{equation}
\label{jointf12}
p_{Z,\dZ}(z,\dz)= \frac{1}{2\pi q_0}\frac{k'\left(\frac{\dz}{f_1(z)}\right)}{f_1(z)^2 f_2\left( k\left(\frac{\dz}{f_1(z)}\right) \right)} \exp\left(-\frac{z^2+k\left(\frac{\dz}{f_1(z)}\right)^2}{2}\right)
\end{equation}

\end{prop}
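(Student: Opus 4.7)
The plan is to reduce the statement to Proposition \ref{prop:mom} and then perform a change of variables in the resulting double integral. The factorization hypothesis (\ref{eq:prod}) and the monotonicity assumption on $\dy\mapsto \dy f_2(\dy)$ are tailor-made for this substitution.

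First I would apply Proposition \ref{prop:mom} with the factorized $f(u_1,u_2)=f_1(u_1)f_2(u_2)$: condition (\ref{bouf2}) is assumed, so stationarity/ergodicity follow and
\begin{align*}
E'[g(Z_s,\dot Z_s)] = \frac{1}{q_0}E\!\left[\frac{g\bigl(U_1,\,U_2 f_1(U_1)f_2(U_2)\bigr)}{f_1(U_1)f_2(U_2)}\right]
= \frac{1}{2\pi q_0}\iint_{\mathbb R^2}\frac{g\bigl(u_1,u_2 f_1(u_1)f_2(u_2)\bigr)}{f_1(u_1)f_2(u_2)}\,e^{-(u_1^2+u_2^2)/2}\,du_1\,du_2.
\end{align*}

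Next I would perform the change of variables $(u_1,u_2)\mapsto (z,\dz)$ defined by $z=u_1$ and $\dz=u_2 f_1(u_1)f_2(u_2)=f_1(z)\,\bigl(u_2 f_2(u_2)\bigr)$. Since $v\mapsto v f_2(v)$ is one-to-one with inverse $k$, the relation $u_2 f_2(u_2)=\dz/f_1(z)$ gives
\begin{align*}
u_2 = k\!\left(\tfrac{\dz}{f_1(z)}\right).
\end{align*}
For the Jacobian, at fixed $u_1=z$ one computes $\partial \dz/\partial u_2 = f_1(z)\,(v\,f_2(v))'\bigr|_{v=u_2}$, and because $k$ is the inverse of $v\mapsto v f_2(v)$, $du_2 = \dfrac{k'(\dz/f_1(z))}{f_1(z)}\,d\dz$. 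This is the step I expect to require the most care, to make sure the positivity of $f$, the monotonicity hypothesis and the chain-rule identity $k'(v f_2(v))=1/(vf_2(v))'$ are correctly lined up.

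Substituting $u_1^2=z^2$, $u_2^2=k(\dz/f_1(z))^2$, $f_1(u_1)f_2(u_2)=f_1(z)f_2(k(\dz/f_1(z)))$ and the above Jacobian into the integral, the factor $1/[f_1(z)f_2(k(\dz/f_1(z)))]$ from the integrand combines with the $1/f_1(z)$ from the Jacobian to produce $f_1(z)^{-2}f_2(k(\dz/f_1(z)))^{-1}$, and one obtains
\begin{align*}
E'[g(Z_s,\dot Z_s)] = \iint_{\mathbb R^2} g(z,\dz)\, \frac{1}{2\pi q_0}\,\frac{k'\!\left(\tfrac{\dz}{f_1(z)}\right)}{f_1(z)^2 f_2\!\left(k\!\left(\tfrac{\dz}{f_1(z)}\right)\right)}\exp\!\left(-\frac{z^2+k(\dz/f_1(z))^2}{2}\right) dz\,d\dz.
\end{align*}
Since this holds for every bounded measurable $g$, the bracketed factor is the joint density, which is exactly (\ref{jointf12}). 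A brief closing remark would verify that the expression is indeed non-negative and integrates to one (the latter by running the change of variables in reverse with $g\equiv 1$), confirming that $p_{Z,\dZ}$ is a genuine probability density.
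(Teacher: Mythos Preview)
Your proposal is correct and follows exactly the same route as the paper: apply Proposition~\ref{prop:mom} to obtain the double integral in $(u_1,u_2)$, then perform the change of variables $z=u_1$, $\dz=u_2 f(u_1,u_2)$. You simply spell out the Jacobian computation and the substitution in more detail than the paper, which compresses this step into a single sentence.
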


\begin{proof}
According to Proposition~\ref{prop:mom} we have
$$\lim_{H\rightarrow\infty}\frac1H\int_0^H g(Z_s,\dot Z _s)ds=\frac{1}{2\pi q_0} \iint \frac{g(u_1,u_2 f(u_1,u_2))} {f(u_1,u_2)} \exp\left(-\frac{u_1^2+u_2^2}{2}\right) du_1du_2.$$
The change of variable $z=u_1$ and $\dz=u_2f(u_1,u_2)$ gives (\ref{jointf12}).

\end{proof}

The model which defines $Z$ in Proposition~\ref{prop:f1f2} is referred to as the \textit{model with no interaction} hereafter.

\subsection{Non-parametric estimation in the model with no interaction}
\label{sec:estimnp}

\textbf{Identifiability constraints.} 
In this section we assume that the conditions of Proposition~\ref{prop:f1f2} hold true and discuss the estimation of the functions $f_1$ and $f_2$.  These functions are defined up to a multiplicative constant and hereafter we assume, without loss of generality, that
\begin{equation}
\label{eq:ident2}
\int_{-\infty}^{+\infty} \frac{\exp\left( -u^2/2 \right)}{f_2(u)}du=1
\end{equation}
and thus that $q_0=\frac{1}{2\pi}\int_{-\infty}^{+\infty} \frac{\exp\left( -u^2/2 \right)}{f_1(u)}du$.  

\textbf{Marginal and conditional pdf.} 
We then deduce from (\ref{jointf12}) that the stationary marginal pdf of $Z$  and conditional pdf $\dZ_t$ given $Z_t=z$ are given respectively by
\begin{eqnarray}
\label{pdfZ}
&&p_{Z}(z) =  \int_{\mathbb R} p_{Z,\dZ}(z,\dz)d\dz =\frac{1}{2\pi q_0}\frac{\exp\left(-\frac{z^2}{2}\right)}{f_1(z)}\\
\label{eq:loiZdZ}
&&p_{\dZ|Z=z}(\dz) =  \frac{p_{Z,\dZ}(z,\dz)}{p_{Z}(z)} =\frac{k'\left(\frac{\dz}{f_1(z)}\right)k\left(\frac{\dz}{f_1(z)}\right)}{\dz} \exp\left(-\frac{k'\left(\frac{\dz}{f_1(z)}\right)^2}{2}\right).
\end{eqnarray}

Estimates of $f_1$ and $f_2$ can be derived from (\ref{pdfZ}-\ref{eq:loiZdZ}) as described below.

\textbf{Estimation of $f_1$.}

The estimate of $f_1$ considered hereafter is based on (\ref{eq:loiZdZ}) where $f_1(z)$ appears as a scale factor of the conditional distribution of $\dZ_t$ given $Z_t=z$. More precisely we have  $E'[\dZ_t|Z_t=z]=0$ and 
\begin{equation}
\label{eq:estimf1}
 E'[|\dZ_t||Z_t=z] = 2 f_1(z).
\end{equation}	
$f_1$ can be estimated by plugging in  a non-parametric estimate of the conditional expectation $E'[|\dZ_t||Z_t=z]$ in this relation. In practice we used the Nadaraya-Watson kernel estimate originally proposed in \cite{nadaraya1964estimating,watson1964smooth}.

An alternative estimate of $f_1$ is suggested by (\ref{pdfZ}) which shows that the marginal distribution of $Z$ depends only on $f_1$ and not on $f_2$. 
More precisely we have
\begin{equation}
\label{eq:estimf1pdf}
f_1(z) =  \frac{1}{2\pi q_0}\frac{\exp(-\frac{z^2}{2})}{p_{Z}(z)}
\end{equation}
and an estimate of $f_1$ may be obtained by replacing the true pdf $p_{Z}$ by a parametric or non-parametric estimate in this relation. Note however that the unknown constant $q_0$ appears in this relation and thus need to be estimated from the data. A possible estimate of $q_0$  may be obtained using the up-crossing intensity of the zeros level  $\nu^+(Z,0)$ which satisfies (see Section~\ref{sec:upc} for more details) $$\nu^+(Z,0)=\frac{1}{2\pi q_0}.$$ This estimation strategy is not further discussed hereafter for simplicity reasons but may be useful in applications where the marginal distribution of $Z_t$ (and its tails for example) is of particular importance. We found that both approaches give similar results on simulations and the link between both estimates is further discussed in Section~\ref{sec:upc} (see (\ref{eq:linkf1})).

 \textbf{Estimation of $f_2$.} Let us denote $V_t=\frac{\dZ_t}{f_1(Z_t)}$. According to (\ref{eq:loiZdZ}) $V_t$ is independent of $Z_t$ and the pdf $p_v$ of $V_t$ is
\begin{equation}
%\label{eq:loiV}
\nonumber
p_V(v) = \frac{k'(v)k(v)}{v} \exp\left(-\frac{k(v)^2}{2}\right).
\end{equation}
It means in particular that for the model with no interaction the conditional distribution of $\dZ_t$ given $Z_t=z$ depends on the level $z$ only through a scale factor $f_1(z)$. We then deduce that
$$k'(v)k(v) \exp\left(-\frac{k(v)^2}{2}\right)=v p_V(v)$$
and thus, after integration between 0 and $v$ (note that $k$ is increasing and $k(0)=0$), we get 
$$1-\exp(-k(v)^2/2)=\int_0^v up_V(u) du $$
and finally
\begin{equation}
\label{eq:f2}
k(v) =  \sign(v)\sqrt{-2 \ln \left(1- \int_0^v zp_V(z) dz\right)}.
\end{equation}
This expression is used to derive the following non-parametric estimate for $f_2$ as described below.
\begin{enumerate}
	\item Compute $\hat V_t = \frac{ \dZ_t}{\hat f_1( Z_t)}$ where $\hat f_1$ is the estimate of $f_1$.	
	\item Estimate $\int_0^v zp_V(z) dz$ using a simple Monte Carlo approach, e.g. for $v\geq0$ by $$\frac{1}{N} \sum_{t:0 \leq \hat V_t \leq v}   \hat V_t $$ where $N$ denotes the length of the observed sequence. Plug this estimate in (\ref{eq:f2}) to deduce an estimate $\hat k$ of $k$.
	\item By definition, $k$ is the inverse function of $\dz \mapsto \dz f_2(\dz)$ and thus $$f_2\left(k(\dz)\right)=\frac{\dz}{k(\dz)}.$$ The next step consists in inverting numerically this relation after replacing $k$ by $\hat k$ to deduce an estimate $\hat f_2$ of $f_2$. 
	\item In practice  the estimate is noisy close to the origin (the computation of $\frac{\dz}{k(\dz)}$ leads to a division by $0$ at the origin) and we found it useful to smooth the estimate close to the origin. In this work  the LOESS (locally weighted polynomial regression) smoothing method was used. 
		\end{enumerate}

\subsection{Some simulation results}
\label{sec:simunp}

The performance of the estimates discussed in Section~\ref{sec:estimnp} was assessed using simulations of the quadratic exponential model with no interaction considered in Section~\ref{sec:simup}. Note that for this model we have $$\dy f_2(\dy) \propto \dy \exp\left(\beta_2 \dy + \frac{\beta_{2,2}}{2} \dy^2\right)$$ and that this function in strictly increasing on $\mathbb R$ if and only if $\beta_2^2-4\beta_{2,2}<0$. This condition is satisfied for the parameters values used for the simulations (see Section~\ref{sec:simup}). It implies that the function $\dy \mapsto \dy f_2(\dy)$ is one to one and thus Proposition~\ref{prop:f1f2} applies.

Figure~\ref{fig:exboot} shows typical parametric and non-parametric estimates of $f$ obtained on a simulated sequence. Both estimates seem to be able to reproduce the global shape of $f$ with the non-parametric estimate being more noisy. Figure~\ref{fig:bootparnpar} shows fluctuation intervals based on $1000$ simulated sequences and permits a more systematic comparison of both estimates. Again both approaches provide reasonable estimates of $f_1$ and $f_2$ and there is no clear superiority of one of them. The chosen estimation method may thus depend on the application. The parametric estimate is specific to the quadratic exponential model. If this model is not appropriate for a particular application then other parametric model may be developed or the non-parametric approach may be considered. Note however that the non-parametric approach is valid only for the model with no interaction and that this assumption may also be restrictive for some applications. We will see in Section~\ref{sec:wave} that the quadratic exponential model seems realistic for the particular wave dataset considered in this paper. In such situation the parametric estimate has the additional advantage of providing a description of the observed sequence with a small number of parameters.

\begin{figure}[!ht]
\centering
			\makebox{\includegraphics[scale=.8]{}} 
				%	\makebox{\includegraphics[scale=.8]{Figures/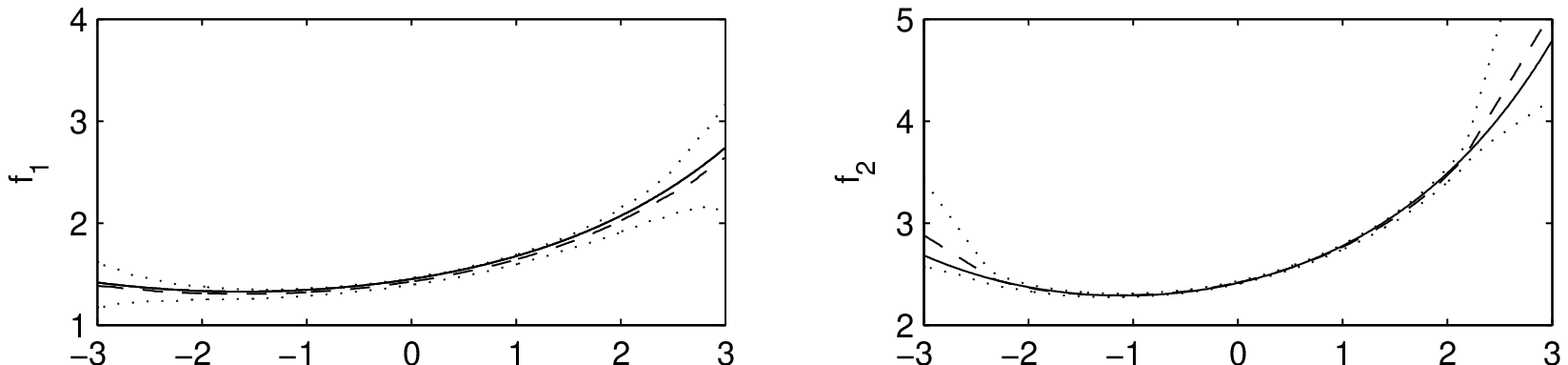}} 
\caption{\label{fig:exboot} {Comparison of the true (left) function $f$ with the parametric (middle) and non-parametric estimate (right) for a particular simulated sequence. The simulation experiment is described in Section~\ref{sec:simup}.}}
\end{figure}

\begin{figure}[!ht]
\centering
			\makebox{\includegraphics[scale=.8]{}} 
					\makebox{\includegraphics[scale=.8]{bootnparfig.eps}} 
\caption{\label{fig:bootparnpar} {Fluctuation intervals for the estimates of $f_1$ (left) anf $f_2$ (right) using the parametric estimate (top) and the non-parametric estimate (bottom). The full lines correspond to the true functions, the dashed line to the mean estimated values and the dotted line to 95\% fluctuation intervals.
Results based on $1000$ simulated sequences. The simulation experiment is described in Section~\ref{sec:simup}.}}
\end{figure}

\section{Model with marginal link function}
	\label{sec:upc}
	
In this section we discuss the statistical inference in the model defined by (\ref{eq:mod2}-\ref{eq:marg}) where a marginal transformation is added to the model discussed in the previous sections. Section~\ref{sec:estimh} is devoted to the estimation of the marginal link function $h$. The estimation of the time change function $f$ and of the second order structure of the Gaussian process $Y$ is discussed respectively in Sections~\ref{sec:reste} and \ref{sec:second}.

\subsection{Up-crossing intensity and estimation of $h$}
\label{sec:estimh}
As already mentioned in the introduction, if $f=1$ then the model reduces to $h(X_t)= Y_t$ with $\{Y_t\}$ a Gaussian process. This transformed Gaussian model is usual for analyzing wave data and at least two different strategies have been proposed in the literature to estimate $h$. The first one is based on the marginal distribution of $X$ which is directly related to $h$, and the second one uses the up-crossing intensity (see \cite{rychlik1997modelling,azais2009level} and references therein). 

Both approaches may be extended to the model defined by (\ref{eq:mod2}-\ref{eq:marg}). However the marginal distribution of $X$ depends on both $f$ and $h$ and it leads to complicated estimation procedures when trying to isolate the effect of $h$ on this distribution. We have thus chosen to focus on methods based on up-crossings. Indeed the time deformation does not modify the levels of the observed sequence and thus the up-crossing intensity will mainly depend on $h$ (up to a normalizing constant). This is stated more formally in the proposition below.

\begin{prop}
Let $Y$ be a univariate differentiable ergodic stationary Gaussian process such that $E[Y_t]=0$ and $var(Y_t)=var(\dY_t)=1$.  Assume that (\ref{eq:mod2}-\ref{eq:marg}) hold true with $f$ a positive measurable real function such that (\ref{bouf2}) holds true and $h$ an increasing function. 
Let $\nu^+_T(X,u)$ denotes the number of up-crossings of the level $u$ of the process $X$ on the time interval $[0,T]$  and $\nu^+(X,u) := \lim\limits_{\substack{T \to \infty}} \frac1T \nu^+_T(X,u)$ the up-crossing intensity. Then 
\begin{equation}
\label{eq:rice}
\nu^+(X,u) = \nu^+(Z,h(u))= \frac{1}{q_0}\nu^+(Y,h(u))=\frac{1}{2 \pi q_0 } \exp\left(-\frac{h(u)^2}{2}\right).
\end{equation}
\end{prop}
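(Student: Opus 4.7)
The plan is to establish the three equalities in (\ref{eq:rice}) in turn, working from the simplest/right-most relation backwards.

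First, I would recall the classical Rice formula. Since $Y$ is a differentiable stationary centered Gaussian process with $\mathrm{var}(Y_t)=\mathrm{var}(\dY_t)=1$ and $Y_t$ and $\dY_t$ independent at each time, Rice's formula yields
\begin{equation*}
\nu^+(Y,v)=\frac{1}{2\pi}\exp\left(-\frac{v^2}{2}\right),
\end{equation*}
which handles the last equality.

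Second, I would prove $\nu^+(Z,v)=q_0^{-1}\nu^+(Y,v)$ by exploiting the time change. The function $\varphi$ is continuous and strictly increasing because $f>0$ and $\bY=(Y,\dY)$ has continuous paths, so it is a continuous bijection from $[0,T]$ onto $[0,\varphi(T)]$. Combined with the pathwise identity $Z_t=Y_{\varphi(t)}$, this yields a one-to-one correspondence between the up-crossings of $v$ by $Z$ on $[0,T]$ and the up-crossings of $v$ by $Y$ on $[0,\varphi(T)]$; hence $\nu^+_T(Z,v)=\nu^+_{\varphi(T)}(Y,v)$, so that
\begin{equation*}
\frac{\nu^+_T(Z,v)}{T}=\frac{\nu^+_{\varphi(T)}(Y,v)}{\varphi(T)}\cdot\frac{\varphi(T)}{T}.
\end{equation*}
Using the representation (\ref{eq:psi1}) of the inverse $\psi$ of $\varphi$, the ergodic theorem applied to the stationary ergodic process $\bY$ gives $\psi(T)/T=T^{-1}\int_0^T\!1/f(\bY_s)\,ds\to q_0$ almost surely. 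Since $\varphi\circ\psi=\mathrm{id}$ and both $\psi,\varphi$ are continuous and monotone, this inverts to $\varphi(T)/T\to 1/q_0$ a.s.; in particular $\varphi(T)\to\infty$, which lets me pass to the limit in the first factor by definition of $\nu^+(Y,v)$.

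Third, I would observe that since $h$ is strictly increasing, the equality $h(X_t)=Z_t$ implies that $X$ up-crosses the level $u$ at exactly the same time instants where $Z$ up-crosses $h(u)$; hence $\nu^+_T(X,u)=\nu^+_T(Z,h(u))$ for every $T$, and passing to the limit gives $\nu^+(X,u)=\nu^+(Z,h(u))$.

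The main technical obstacle is the second step, specifically (a) verifying that $\varphi(T)\to\infty$ almost surely so that the Rice limit along the random subsequence $\varphi(T)$ is legitimate, and (b) transferring the ergodic statement about $\psi$ into one about $\varphi$. Both are clean consequences of $q_0\in(0,\infty)$ and the monotonicity of the inverse pair $(\varphi,\psi)$; once they are in place, chaining the three steps delivers (\ref{eq:rice}).
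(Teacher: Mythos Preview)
Your proposal is correct and follows essentially the same route as the paper: the paper also identifies $\nu^+_T(X,u)=\nu^+_T(Z,h(u))=\nu^+_{\varphi(T)}(Y,h(u))$ from the monotonicity of $h$ and the time change, factors $T^{-1}\nu^+_{\varphi(T)}(Y,h(u))$ as $\frac{\varphi(T)}{T}\cdot\frac{1}{\varphi(T)}\nu^+_{\varphi(T)}(Y,h(u))$, invokes $\varphi(T)/T\to q_0^{-1}$ from the ergodic argument in the proof of Proposition~\ref{prop:ergo}, and closes with the Rice formula. Your explicit treatment of the technical points (that $\varphi$ is a continuous strictly increasing bijection, that $\varphi(T)\to\infty$, and the transfer from $\psi(T)/T\to q_0$ to $\varphi(T)/T\to q_0^{-1}$) is a welcome addition but does not change the strategy.
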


\begin{proof}
Since $h$ is increasing and the time change does not modify the observed level we have $\nu^+_T(X,u) = \nu^+_T(Z,h(u)) =\nu^+_{\varphi(T)}(Y,h(u))$. We deduce that

\begin{align*}
\lim\limits_{\substack{T \to \infty}} \frac{1}{T} \nu^+_T(X,u)  &= \lim\limits_{\substack{T \to \infty}} \frac{1}{T} \nu^+_T(Z,h(u))\\
&= \lim\limits_{\substack{T \to \infty}} \frac{\varphi(T)}{T} \frac{1}{\varphi(T)}\nu^+_{\varphi(T)}(Y,h(u))
\end{align*}
where $ \lim\limits_{\substack{T \to \infty}} \frac{\varphi(T)}{T}=q_0^{-1}$ (see proof of Proposition \ref{prop:ergo}). 
The result can then be deduced from the following classical result which holds true for differentiable Gaussian processes (Rice formula, see e.g. \cite{lindgren2006lectures})
 $$\lim\limits_{\substack{T \to \infty}} \frac{1}{T}\nu^+_{T}(Y,u) = \frac{1}{2 \pi} \exp\left(-\frac{u^2}{2}\right).$$
\end{proof}

The previous proposition has been used to build non-parametric estimates of $h$ as described below.  According to (\ref{eq:rice})  we get 
\begin{equation}
\label{eq:estimh2}
-\frac{h(x)^2}{2}= \log(\nu^+(X,x) )  -\log(\frac{1}{2\pi q_0}).
\end{equation}
$q_0$ needs then to be estimated from the data 	in order to be able to use this relation to estimate $h^2$. %Note that there exists $x_0$ such that $h(x_0)=0$ (otherwise $0$ would not belong to the state space of the process $Z_t$ and $Y_t$) and thus $sup_{x}\{-\frac{h(x)^2}{2}\} = -\frac{h(x_0)^2}{2} =0$. 
For that we used that
\begin{equation}
\label{eq:estimq0}
\sup_{x}\{\log\left(\nu^+(X,x)\right)\} =\log(\frac{1}{2\pi q_0}).
\end{equation} 
which shows that $q_0$ can be estimated from the mode of the up-crossing intensity (ie the intensity of the most often crossed level). Once $q_0$ is estimated we are faced to a similar estimation problem than discussed in \cite{rychlik1997modelling} for transformed Gaussian models. It consists in taking a square root of (\ref{eq:estimh2}) using the monotony of $h$ and eventually smoothing the obtained estimates (the loess method was used in this work). An isotonic regression method may also be applied to force the obtained estimates to be increasing but this was not necessary for the practical cases considered in this paper.

Figure~\ref{fig:booth} shows simulation results obtained with the setting described in Section~\ref{sec:simup} and $h$ estimated from the data (the true $h$ function used for the simulations is shown as a full line on the left plot). The bias and variance of the estimate are generally low with higher variances in the tails where the number of up-crossings is lower. Note also that the concavity of $h$ implies that the marginal distribution of $X$ has a fatter upper tail compared to the lower tail and this has consequences on the estimate of $h$ which has a higher variance in the upper tail.

\subsection{Estimation of $f$}
\label{sec:reste}

The estimation of $f$ when the process $Z$ is observed has been discussed in the previous sections. However only $X$ is generally observed when considering the model (\ref{eq:mod2}-\ref{eq:marg}). Once $h$ is estimated it is possible to compute an approximation of $Z$ as
$$\hat Z_t = \hat h (X_t)$$
with $\hat h$ the estimate of $h$. Estimates of the time change function $f$ can then be obtained by replacing the latent process $Z$ by $\hat Z$ in the methods described in Sections~\ref{sec:para} and \ref{sec:npara}. Comparing the right plots of Figures~\ref{fig:booth} and \ref{fig:bootpar} shows that using $\hat Z$ instead of $Z$ modifies the statistical properties of the  estimates of $f$ in the parametric case (similar results were obtained with the non-parametric estimates introduced in Section~\ref{sec:npara}). It is not surprising since $h$ is estimated such that the up-crossing intensity of $\hat Z$ has a Gaussian shape (see (\ref{eq:rice})) and this up-crossing intensity is related to the joint distribution of $(\hat Z,\hat \dZ)$ via the Rice formula as follows
$$ \nu^+(\hat Z,z) = \int_{\mathbb{R}} p_{\hat Z,\hat \dZ}(z,\dz) d \dz  
 \approx \frac{1}{2 \pi q_0} \exp(-z^2/2)
 $$
Estimates of $f$ are based on this joint distribution and thus may be impacted when using $\hat Z$ instead of $Z$.
Note also that this relation implies that
\begin{equation}
\label{eq:linkf1}
p_{\hat Z} (z) E[|\hat \dZ_t| |\hat Z_t=z]  \approx \frac{1}{2 \pi q_0} \exp(-z^2/2)
\end{equation}
and may explain why the non-parametric estimates of $f_1$ based on (\ref{eq:estimf1pdf}) or (\ref{eq:estimf1}) lead to similar results (it can be shown that a similar relation holds true when replacing $p_{\hat Z} (z)$ and $E[|\hat \dZ_t| |\hat Z_t=z]$ by non-parametric kernel estimates).

\begin{figure}[!ht]
\centering
			\makebox{\includegraphics[scale=.8]{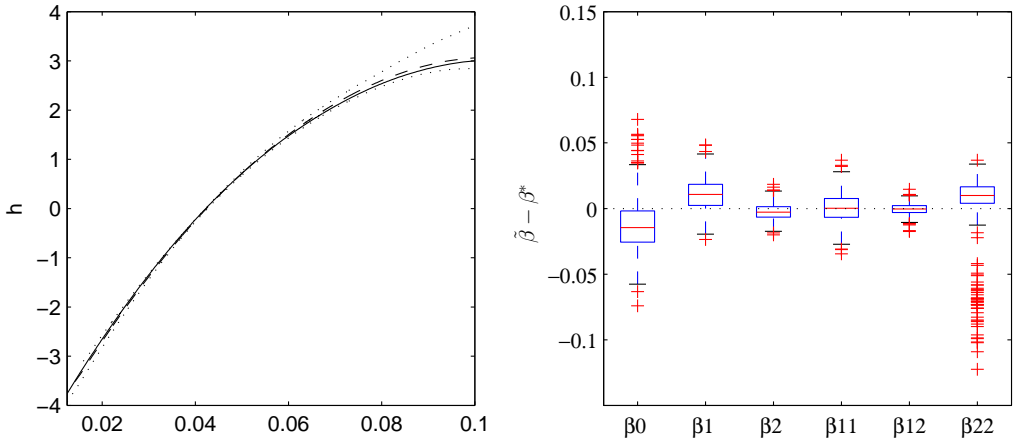}} 
	%				\makebox{\includegraphics[scale=.8]{Figures/bootnparfig.eps}} 
\caption{\label{fig:booth} {Left plot: fluctuation interval for the estimate of $h$. The full lines correspond to the true function, the dashed line to the mean estimated values and the dotted line to 95\% fluctuation interval. Right plot: distribution of $\tilde \beta -\beta^*$ in the quadratic exponential model when $h$ is unknown. Results based on $1000$ simulated sequences. The simulation experiment is described in Section~\ref{sec:simup}.}}
\end{figure}

\subsection{Estimation of the second order structure of $Y$}
\label{sec:second}

$Y$ is stationary Gaussian process with zero mean and is thus completely characterized by its second order structure. Again the process $Y$ is not directly observed but can be approximated when the function $h$ and $f$ have been estimated. In order to simplify the presentation we assume in this section that the process $Z$ is observed. In the practical applications it is not observed but replaced by the approximation $\hat Z$ introduced in the previous section. Recall  that the time change is defined through the following differential equation
\begin{equation}
\label{edphi}
\varphi(t) = \int _0 ^t f(Y_{\varphi(s)},\dY_{\varphi(s)}) ds
\end{equation}
with $Y_{\varphi(t)}=Z_t$ and $\dY_{\varphi(s)}$ related to  $(Z_t,\dZ_t)$ implicitly as follows
\begin{equation}
\label{dYZ}
\dZ_t =  f\left(Z_t,\dY_{\varphi(t)}\right)
\end{equation}

A Euler scheme has then been used to compute an approximation of $\varphi$ defined by (\ref{edphi}) where the implicit equation (\ref{dYZ}) is solved at each time step. In the separable model defined in Section~\ref{sec:npara} it can be easily done using 
$$\dY_{\varphi(t)}= k\left(\frac{\dZ_t}{f_1\left(Z_t\right)}\right)$$
and the function $k$ is directly estimated from the data. A numerical  optimization procedure was used to solve (\ref{dYZ}) in the quadratic exponential model.
 
We can then deduce an approximation $\hat Y$ of $Y$ and compute a standard estimate of the second order structure (autocorrelation or spectral density) on the sample $\hat Y$.

\section{Application to wave data}
\label{sec:wave}

This section starts with a brief overview of physically based wave models which lead to time or space deformations. We then introduce the wave dataset and discuss the results obtained when fitting the model (\ref{eq:mod2}-\ref{eq:marg}) to this dataset.

\subsection{Physical motivations}

In the gravity wave theory the non-linear solution of the free surface evolution has been proven to be well approximated by the first order linear Lagrangian solution of the Euler equations (Gerstner wave in harmonic case, Pierson (\cite{pierson1961models}) for multi-harmonic or irregular waves).) Let $(p_0+\delta(t,p_0),\chi(t,p_0))$ denote the two-dimensional linear trajectory of a water particle of the free surface located in $p_0$ at time $t=0$, such that the height of the water surface at time $t$ and location $p_0+\delta(t,p_0)$ is $\chi(t,p_0)$. The non-linear (Eulerian) free surface elevation $\eta$ is thus given by
$$\eta(p_0+\delta(t,p_0),t)=\chi(t,p_0).$$
It corresponds to a modulation in space of the vertical movements of the particles.
In the case where the modulation is strong the solution of this implicit equation may not be unique. It corresponds in the physical meaning to wave breaking.

More recently a canonical Lie transformation was proposed for gravity waves in order to transform the real physical world in a linear (more exactly close to linear) world or inversely (see \cite{creamer1989improved}). 
In the infinite water depth case it leads to an Inviscid Burgers' equation on the free surface elevation

$$\frac{\partial}{\partial \lambda} \tilde \eta(p,t,\lambda) + \frac{1}{2} \frac{\partial}{\partial p} \tilde \eta^2(p,t,\lambda) = 0$$

with $0 \leq \lambda \leq 1$ and $\tilde \eta$ the Hilbert transform in space. $\eta_1(p,t) = \eta(p,t,1)$  and $\eta_0(p,t) = \eta(p,t,0)$ correspond to respectively 
 the free surface elevation in the linear and non-linear worlds. The solution of the Burgers' equation satisfies 
 $$\tilde \eta_0(p-\tilde \eta_1(p,t),t)=\tilde \eta_1(p,t).$$
Again modulation of space appears as a natural way to transform linear waves to non-linear ones.
In the case of finite water depth and of more complex situations with varying sea bottom in shallow
water, the solution of the Euler equations are not so simple, but the idea of modulation by
a function of the process itself could appear still pertinent.

\subsection{Data}
The proposed methodology was used to analyze wave data from an experiment carried out in a wave flume. A detailed description of the experiment can be found in \cite{becq1999non}. The bathymetric profile of the flume experiment is shown on Figure~\ref{fig:waveb} together with the various locations where the sea-surface elevation is measured (numbered from 1 close to the wave-maker to 16). The wave-maker generates random phase waves which are expected to be close to a realization of a Gaussian process with a JONSWAP-type spectra. Then the waves propagate to the right on a varying bathymetry which induces strong non-linearities. Examples of time series measured at locations $2$ and $7$ are shown on Figure~\ref{fig:waveb}. It shows in particular that both horizontal and vertical asymmetries increase when the waves propagate to the right. Free surface elevation is recorded over a duration of 40 minutes with a sampling time-step of $0.070$. The peak period is about $2.4s$. It corresponds to about 1000 waves with 34 points per wave.

\label{sec:results}
\begin{figure}[!ht]
\centering
\makebox{\includegraphics[scale=.85]{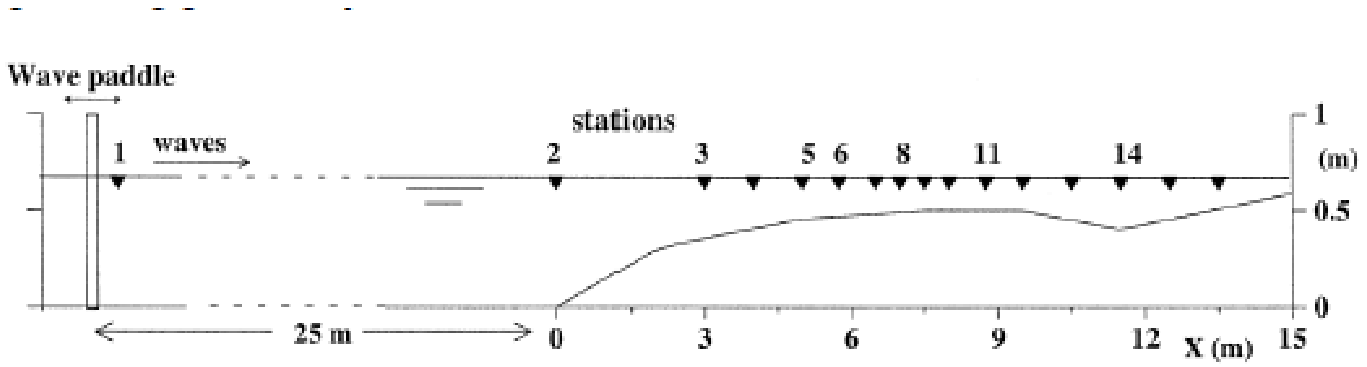}}
\makebox{\includegraphics[scale=.9]{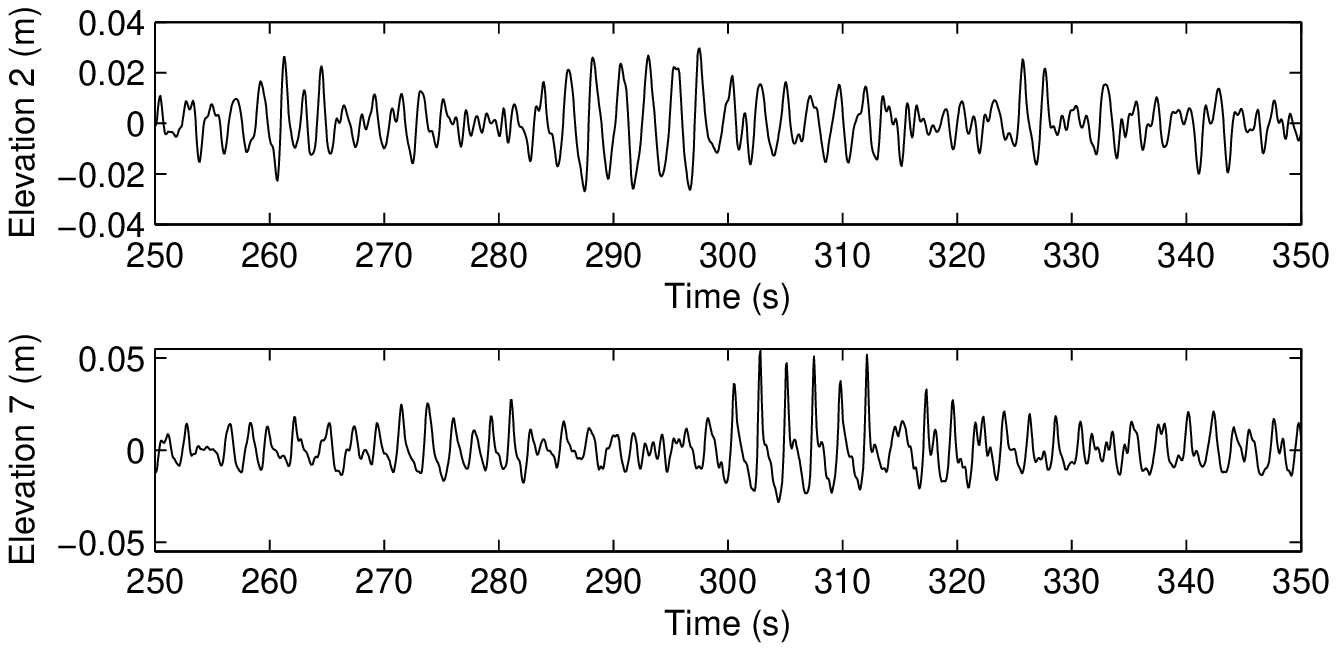}}
\caption{\label{fig:waveb} {Top panel (from \cite{becq1999non}): lay-out of the experimental set-up for the irregular flume experiment. The numbers indicate the position of the wave probes. Sequences of waves measured at stations 2 (middle panel) and 7 (bottom panel)}}
\end{figure}

\subsection{Results}
\label{sec:appli}

The model (\ref{eq:mod2}-\ref{eq:marg}) was fitted at the $16$ locations where wave measurements are available but in the discussion below we mainly focus on the central location $7$. According to Figure~\ref{fig:estimpnp7} the parametric approach for the quadratic exponential model and the non-parametric approach for the model with no interaction lead to estimates with similar shapes for $f$ at location $7$. It is a first indication that the choice of the quadratic exponential model is sensible for the wave dataset considered in this paper. %The non-parametric approach leads to a noisy estimate for $f$, in particular in the tails
Both the parametric and non-parametric approaches were validated using the criteria discussed below and generally slightly better results were obtained with the parametric approach. We have thus chosen to focus on this approach hereafter. It has the additional advantage of providing a parsimonious description of the observed sequence with only $6$ parameters involved. Figure~\ref{fig:evolpara} shows the evolution of the estimates with the measurement location. At locations $1$ and $2$, where it is expected that the observed sequences may be modeled as a realization of a Gaussian process, low estimated values are obtained for all the parameters except the intercept $\beta_0$. It corresponds to an almost constant time change function as in the Gaussian case. Then from location $3$ the parameters values evolve with the varying bathymetry. This evolution is relatively smooth in space. This could be used for example to interpolate the parameter values and simulate synthetic waves at locations with no observation. Note that parametric models such as power transformations and classical wave spectral models may also be considered for $h$ and the spectrum of $Y$ to get a full parametric model.

According to Figure~\ref{fig:evolpara} the values of the interaction term $\beta_{1,2}$ is always low and this suggests that a model with no interaction ($\beta_{1,2}=0$) may be appropriate. Fluctuation intervals for the estimates can be computed using similar simulations as in Section~\ref{sec:reste}. The 95\% interval  obtained for $\beta_{1,2}$ at location $7$ is $[-0.021, -0.008]$. It suggests that the interaction term is low but still statistically significant.

\begin{figure}[!ht]
\centering
\makebox{\includegraphics[scale=.8]{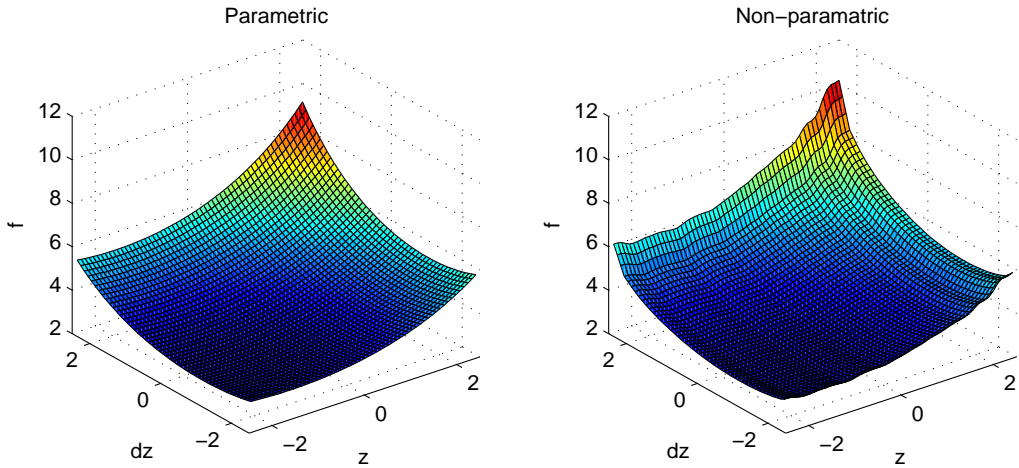}}
\caption{\label{fig:estimpnp7} {Parametric (left) and non-parametric (right) estimates of $f$ at location 7. }}
\end{figure}

\begin{figure}[!ht]
\centering
\includegraphics[scale=.8]{bassin.eps}\\
\includegraphics[scale=.64]{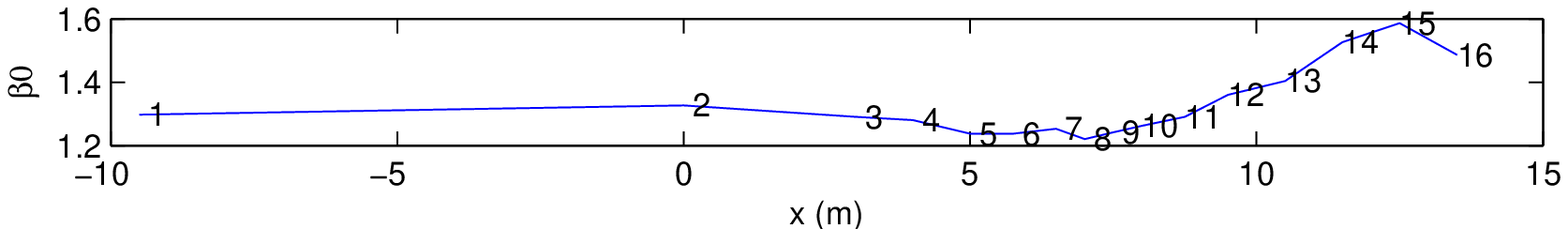} \\
\includegraphics[scale=.64]{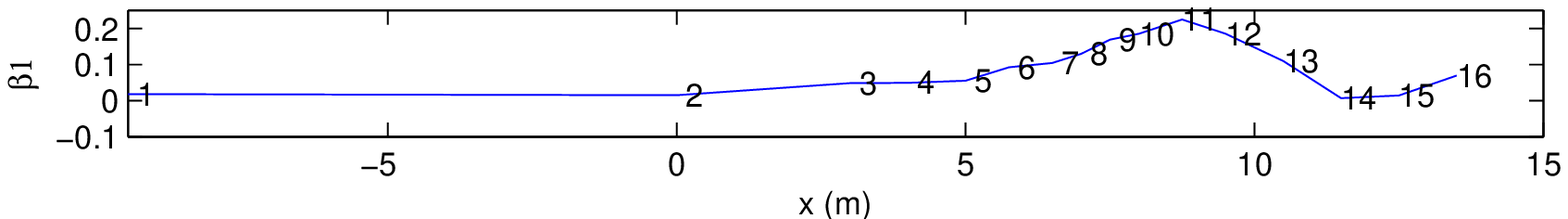} \\
\includegraphics[scale=.64]{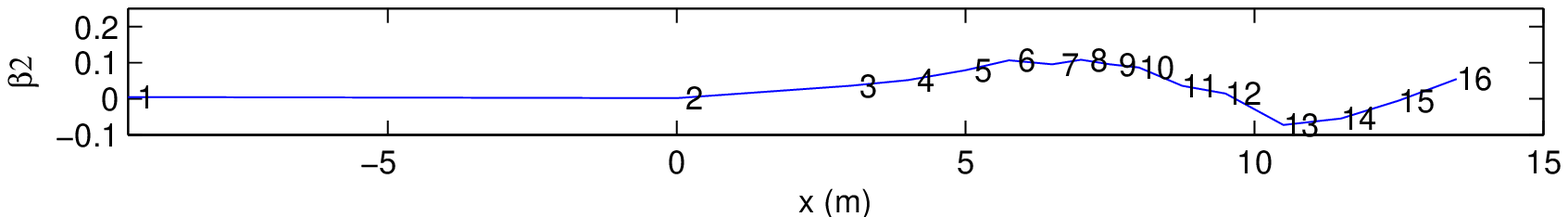}\\
\includegraphics[scale=.64]{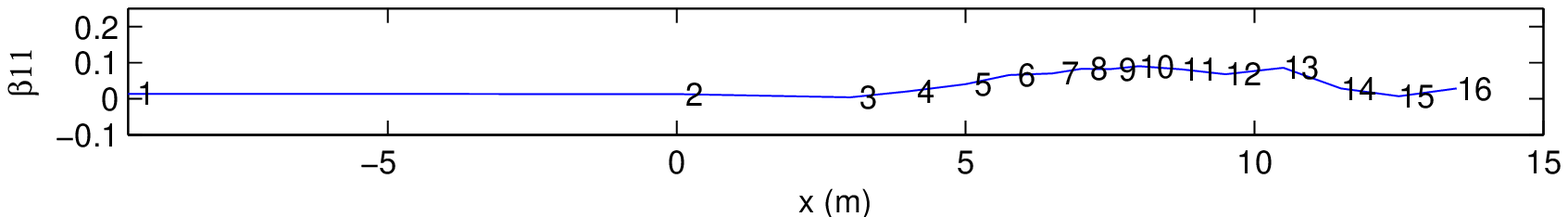}\\
\includegraphics[scale=.64]{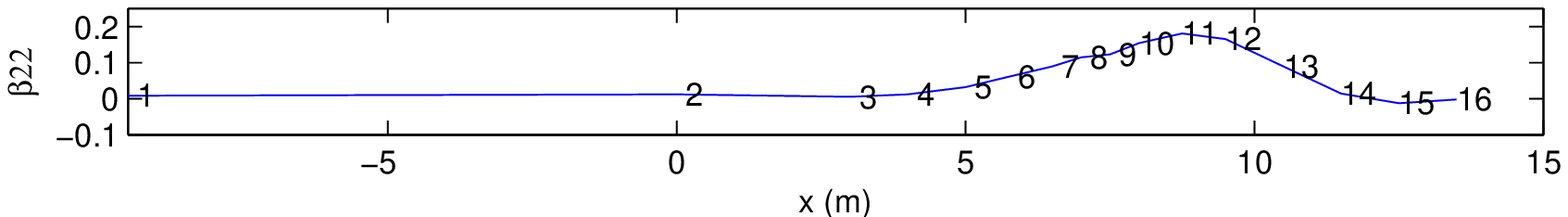}\\
\includegraphics[scale=.64]{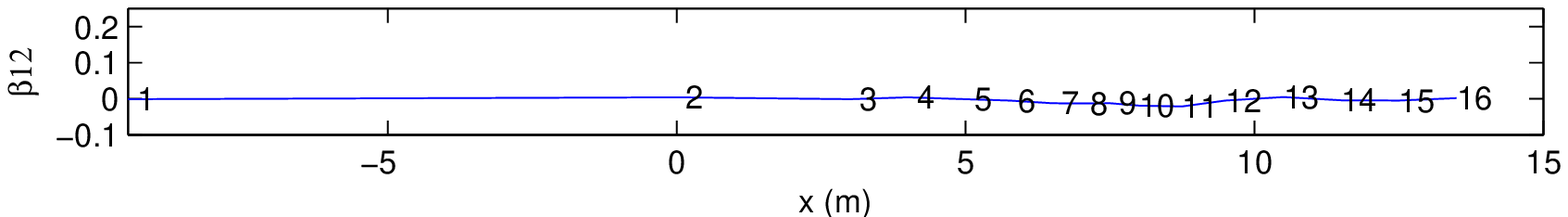}
\caption{\label{fig:evolpara} {Estimates of $\beta$ at the $16$ measurement locations.}}
\end{figure}

Let us now focus on location $7$ and validate the fitted quadratic exponential model. Figure~\ref{fig:wave} shows the estimates of $h$, $f$ and the spectral density of $Y$ at location $7$. As already mentioned in the introduction the estimated $f$ function is increasing in both variables and this permits to create both vertical and horizontal asymmetries by accelerating the chronometer in the crests and in the front of the waves. $h$ is concave and thus enhance again the sharpness of the crests by boosting high levels. If the fitted model is appropriate then the marginally and time transformed sequence $\hat Y$ should have paths which can be reasonably modeled as a realization of a Gaussian process. No formal test was performed but Figure~\ref{fig:wave} shows a sequence of this process and that the marginal and time transformations permit to symmetrize the path of the observed sequence. Another way to validate the model consists in generating sequences of the fitted model and compare the statistical properties of the simulated sequences with the ones of the original data.
We focus on this validation strategy hereafter since it is directly related to the practical application (we want a model which can generate realistic wave sequences) and it is also probably easier to interpret than validation in the Gaussian domain. Note that the fitted model can be very quickly simulated (about one half second to generate 1000 waves on a basic laptop using Matlab).  

\begin{figure}[!ht]
\centering
\makebox{\includegraphics[scale=.8]{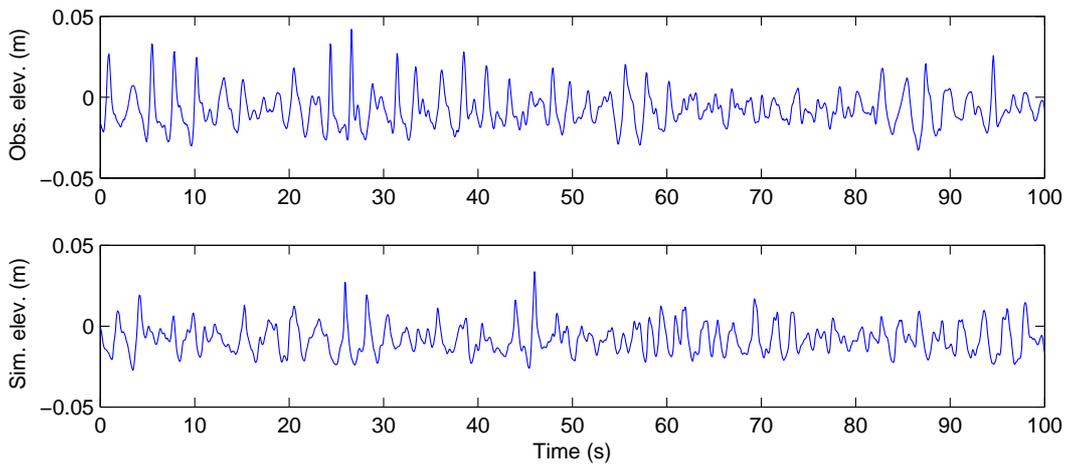}}
\caption{\label{fig:exsim} {A sequence of observation (top) and simulated data (bottom) at location 7}.}
\end{figure}

\begin{figure}[!ht]
\centering
\makebox{\includegraphics[scale=.7]{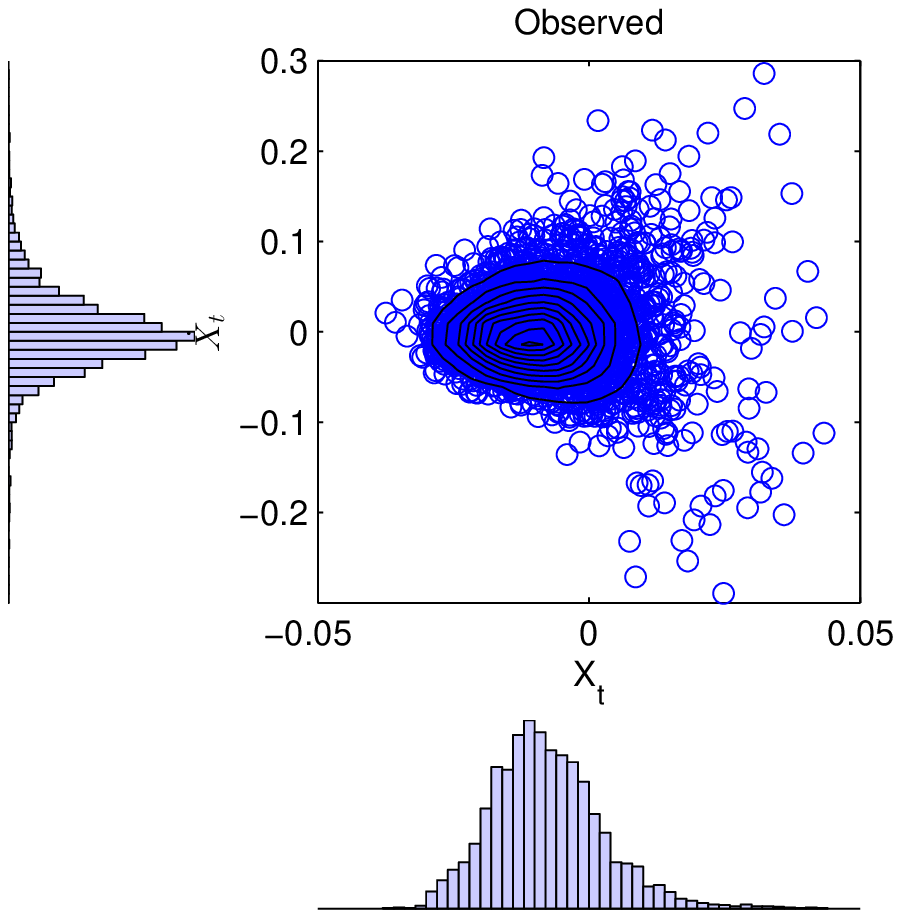}}
\makebox{\includegraphics[scale=.7]{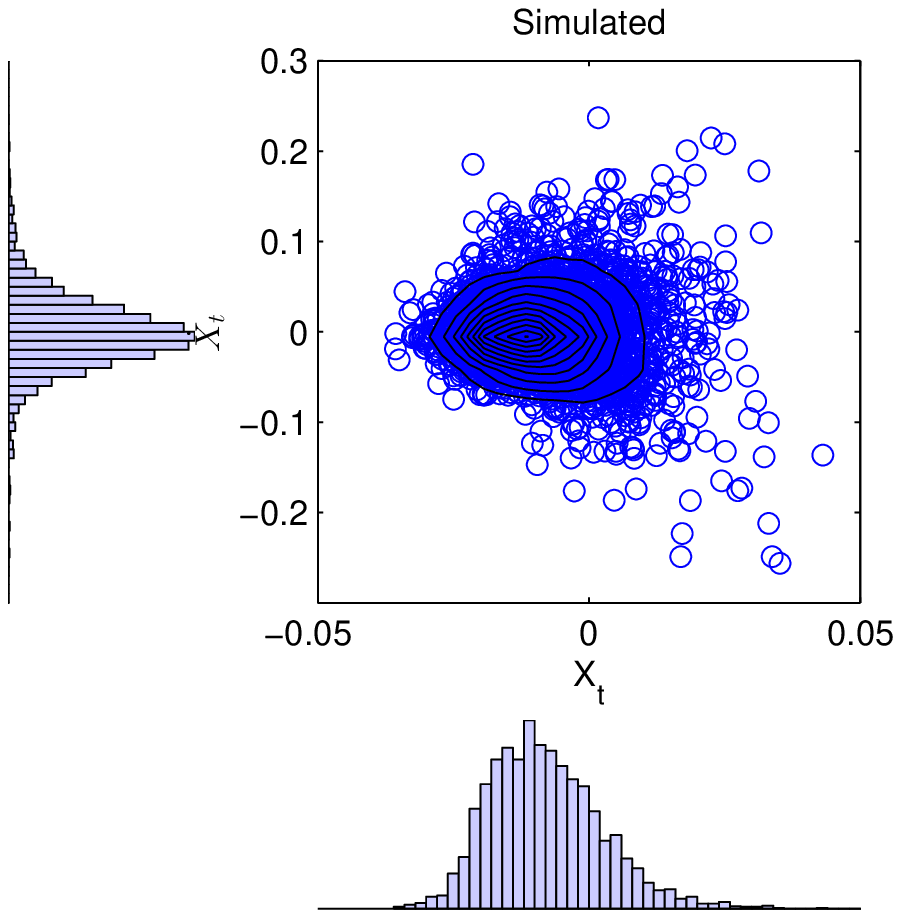}}
\caption{\label{fig:valid1} {Joint distribution of $X_t$ (x-axis) and $\dX_t$ (y-axis) for the observation (left) and the fitted model (right).}}
\end{figure}

An example of simulated sequence is shown on Figure~\ref{fig:exsim}. It looks visually coherent with the observed sequence shown on the same figure. A more systematic comparison is performed below. Figure~\ref{fig:valid1} shows the marginal and joint pdf of $(X_t,\dX_t)$. This distribution is relatively complex with an asymmetric distribution for $X_t$ (the empirical skewness of $X_t$ is $0.75$), a leptokurtic marginal distribution for $\dX_t$ (the empirical skewness of $\dX_t$ is $37.8$) and a complex relation between both variables with for example higher steepness in the crests. The fitted model is able to reproduce the complexity of this distribution. The marginal distribution of $X_t$ is of particular importance for the applications and the quantile-quantile plot shown on Figure~\ref{fig:valid2} confirms that the fit is good except maybe in the upper tail where the fit is slightly less good. An appropriate parametric model for the tails of $h$ may be useful here. 

\begin{figure}[!ht]
\centering
\makebox{\includegraphics[scale=.6]{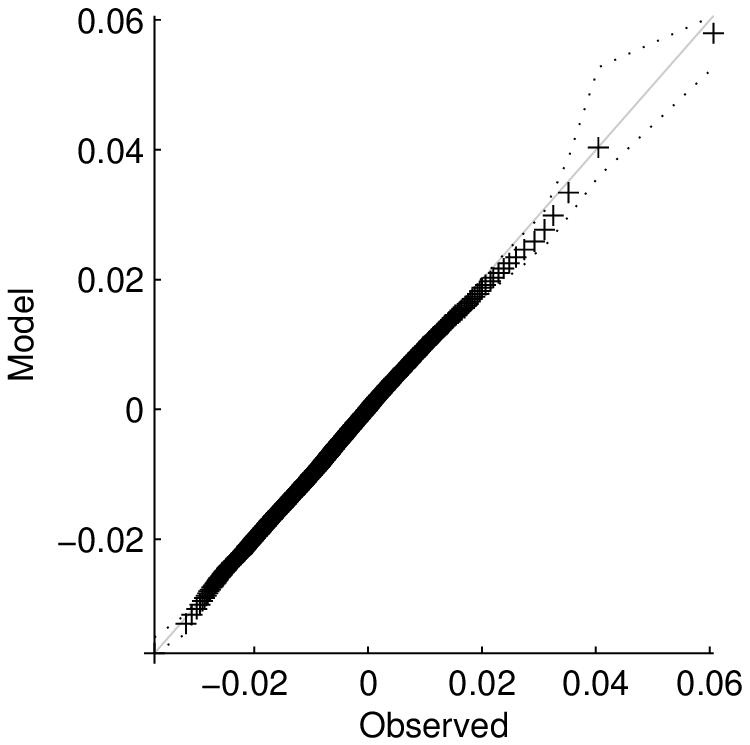}}
\makebox{\includegraphics[scale=.6]{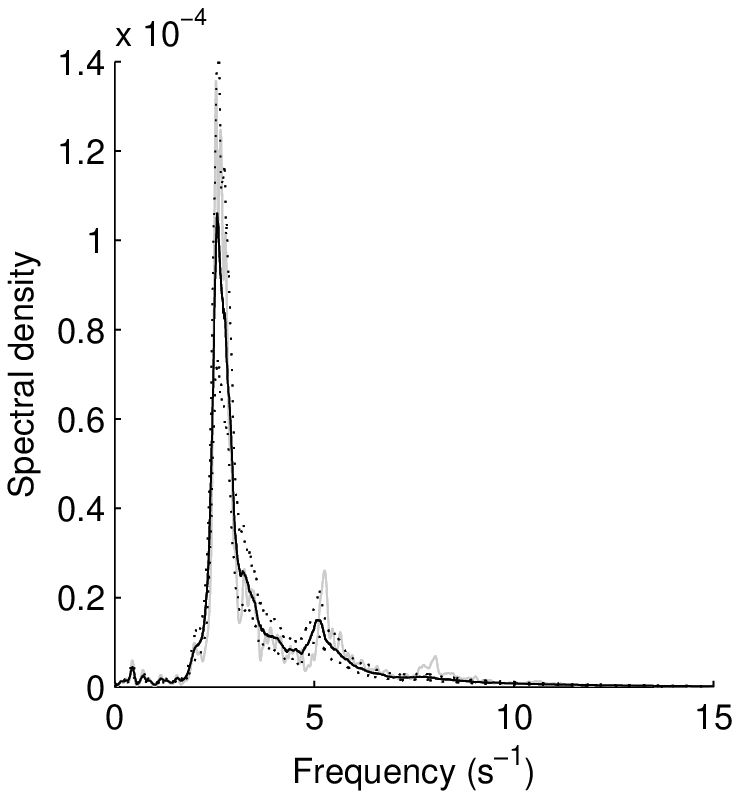}}
\makebox{\includegraphics[scale=.6]{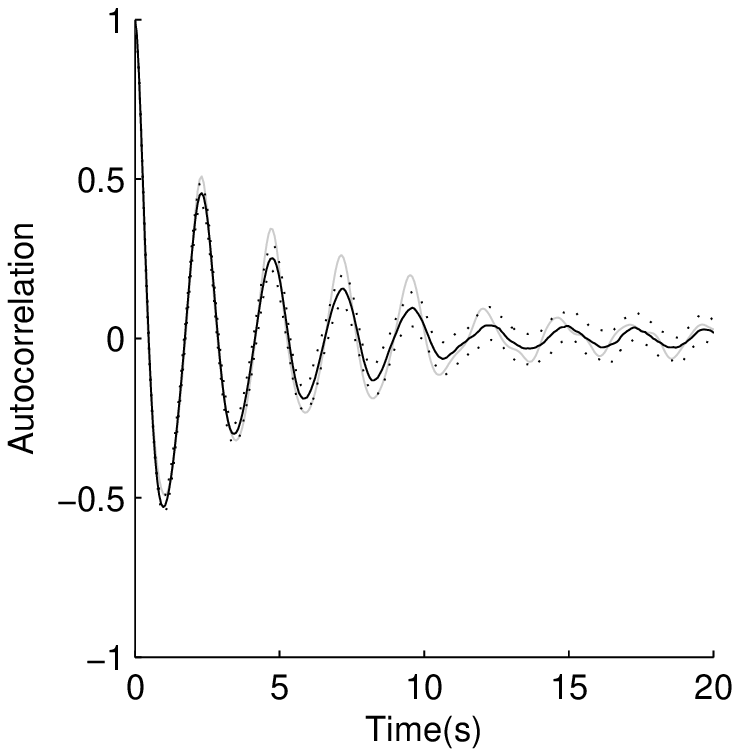}}
\caption{\label{fig:valid2} {Quantile-quantile plot, empirical spectra and autocorrelation functions at location $7$ for the data (grey curves) and the fitted quadratic exponential model (black curves). The dotted lines are 95\% fluctuation intervals computed using simulations.}}
\end{figure}

\begin{figure}[!ht]
\centering
\makebox{\includegraphics[scale=.8]{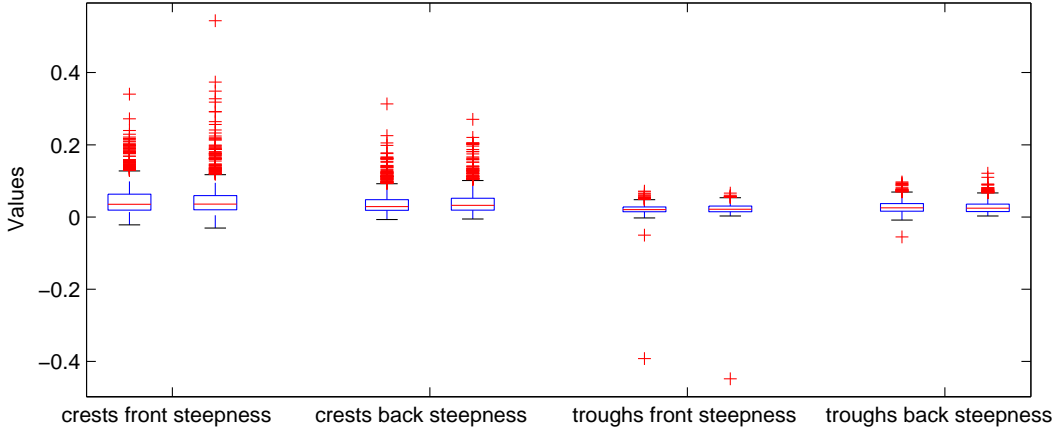}}
%\makebox{\includegraphics[scale=1]{Figures/joint8sim.eps}}
\caption{\label{fig:valid3} {From let to right : distribution of the the crest front steepness, crest back steepness, through front steepness, through back steepness. For each criterion, the first boxplots correspond to the observation and the second one to the fitted model. }}
\end{figure}

Figure~\ref{fig:valid2} also shows that the second order structure of the data is well reproduced by the fitted model except the third peak in the observed spectrum. This third peak is induced by non linear effect which create a low amplitude high frequency components which phase is linked to the other frequencies. The fitted model is not able to reproduce this non-linear interaction but still reproduces the associated asymmetries. Indeed Figure~\ref{fig:valid3} shows the distribution of the various slopes (crest front, crest back, trough front, trough back) computed after extracting individual waves using zeros-crossings (see \cite{Raillard2014} for a more precise definition and \cite{baxevani2014sample} for a review of existing measures of asymmetries). The differences in these four distributions are linked to the asymmetries in the sea surface elevation. For example it shows that the fronts of the crests are the steepest whereas the fronts of the troughs are the flattest. These slope distributions are well reproduced by the fitted model.

\section{Conclusions and perspectives}
\label{sec:conclu}
This paper proposes a new model for asymmetric time sequences. The originality of the proposed model consists in introducing a time change which is dependent of the observed process. The proposed statistical inference is based on the up-crossings and on the joint distribution of the process and its derivatives. Simplifying assumptions are imposed on the shape of the time change function $f$ in order to be able to derive estimates for $f$. We show that these assumptions seems realistic for the particular wave dataset considered in this paper but it may be useful to consider other cases for other applications. This remains to be further investigated together with other modeling issues such as the extension to a spatial or space-time context.

\section*{Appendix : proof of Proposition~\ref{prop:ergo}}

\begin{proof}.
Let us denote the shifted processes $\bY^\tau_.=\bY_{.+\tau}$ and $(\psi^{\tau},\bZ^\tau)$ the processes obtained by replacing $\bY$ is replaced with $\bY^\tau$ in (\ref{eq:psi1}). We get 

\begin{align*}
\psi^{\tau}(t)
=\int_0^t \frac{1}{f(\bY_s^\tau)} ds
=\int_0^t \frac{1}{f(\bY_{s+\tau})} ds
=\psi(\tau+t)-\psi(\tau)
\end{align*}
\begin{align*}
\bZ^\tau_{\psi(\tau+t)-\psi(\tau)}=\bZ^\tau_{\psi^\tau(t)} =\bY^\tau_t =\bY(t+\tau)
=\bZ_{\psi(t+\tau)}.
\end{align*}
It can then be deduced that $\bZ^\tau_.=\bZ_{.+\psi(\tau)}$ which means that the effect of a shift on $\bY$ is a random shift on $\bZ$. 

Now, by the stationarity of $\bY$, we have for any $\tau>0$ and any function $g$ of the form (\ref{gform})
\begin{align*}
q_0E'[g(\bZ_.)]=E\left[\frac{g(\bZ_.)}{f(\bZ_0)}\right]
=E\left[\frac{g(\bZ_.^\tau)}{f(\bZ^\tau_0)}\right]
=E\left[\frac{g(\bZ_{.+\psi(\tau)})}{f(\bZ_{\psi(\tau)})}\right].
\end{align*}
Hence, for any $H>0$
\begin{align*}
q_0E'[g(\bZ_.)]
&=\frac1H\int_0^HE\left[\frac{g(\bZ_{.+\psi(\tau)})}{f(\bZ_{\psi(\tau)})}\right]d\tau\\
&=E\left[\frac1H\int_0^{\psi(H)}\frac{g(\bZ_{.+s})}{f(\bZ_s)}\dot\varphi(s)ds\right]\\
&=E\left[\frac1H\int_0^{\psi(H)}g(\bZ_{.+s})ds\right]
\end{align*}
and for any $t>0$
\begin{align*}
q_0\Big(E'[g(\bZ_.)]-E'[g(\bZ_.+t)]\Big)
&=E\left[\frac1H\int_0^{\psi(H)}g(\bZ_{.+s})ds\right]
-E\left[\frac1H\int_0^{\psi(H)}g(\bZ_{.+t+s})ds\right]\\
&=E\left[\frac1H\int_0^{t}g(\bZ_{.+s})ds\right]
-E\left[\frac1H\int_{\psi(H)}^{\psi(H)+t}g(\bZ_{.+s})ds\right]\\
q_0|E'[g(\bZ_.)]-E'[g(\bZ_.+t)]|&\leqslant \frac{2t}H\|g\|_\infty.
\end{align*}
Since the right hand side can be made arbitrarily small, we have proven the stationarity.
Similar equations can be used for proving the ergodicity. Indeed we have
\begin{align}
\label{eq:ergo}
\frac1T\int_0^Tg(\bZ_{.+s})ds
&=\frac1T\int_0^{\varphi(T)}\frac{g(\bZ_{.+\psi(\tau)})}{f(\bZ_{\psi(\tau)})}d\tau
=\frac{\varphi(T)}T~\frac1{\varphi(T)}\int_0^{\varphi(T)}\frac{g(\bZ^\tau_.)}{f(\bZ^\tau_0)}d\tau
\end{align}
where $\varphi$ denotes the reciprocal function of $\psi$. Note that 
$$\lim_{T\rightarrow\infty} \frac{\psi(T)}{T}=\lim_{T\rightarrow\infty} \frac{1}{T} \int_0^{T}\frac{du}{f(\bY_u)} =q_0$$
by the ergodicity of $\bY$ and thus $\lim_{T\rightarrow\infty} \frac{\varphi(T)}{T}=q_0^{-1}$. Using again the ergodicity of $\bY$ and that $\lim_{T\rightarrow\infty}\varphi(T) = +\infty$ 
 we deduce that the second term in the right hand size term of (\ref{eq:ergo}) converges to a deterministic limit which is its expected value. \\
 (\ref{eq:ergmarg}) is a particular case of the general convergence result proved above  with $g(Y)=g_0(Y_t)$ and remarking that $Y_0=Z_0$.
 \end{proof}

\bibliographystyle{plain}
\bibliography{bib}

\end{document}